%% file: paper.tex
\documentclass[runningheads]{llncs}
\usepackage[T1]{fontenc}
\usepackage{graphicx}
\usepackage{amsmath,amssymb,mathtools}
\usepackage{nicefrac}
\usepackage{microtype}
\usepackage{longtable}

\usepackage{adjustbox}

\usepackage{orcidlink}

\usepackage{tikz}
\usetikzlibrary{
	arrows, %
	automata, %
	positioning, %
	calc, %
    patterns.meta, %
}

\usepackage{xcolor}
\newcommand{\ifempty}[3]{\ifthenelse{\equal{#1}{}}{#2}{#3}}

\colorlet{colorProbability}{teal}
\colorlet{colorAction}{magenta}
\colorlet{colorReward}{green}

\tikzstyle{labelnode}=[font=\footnotesize]
\tikzstyle{dist}=[circle,  inner sep=0.8pt, solid, draw,fill]
\tikzstyle{mdp}=[every label/.style={labelnode}]
\tikzset{ps/.style={draw,circle,inner sep=0pt,text centered,minimum size=6mm,text width=#1},ps/.default=3mm}
\tikzset{pswide/.style={draw,ellipse,minimum size=6mm,text centered,text width=#1},pswide/.default=7mm}
\tikzstyle{ptrans}=[->,semithick,labelnode]
\tikzstyle{every initial by arrow}=[inner sep=0pt] %
\tikzset{init/.style={initial #1, initial text={}, initial distance=4mm},init/.default=left}
\tikzset{anode/.style={circle, fill=black, inner sep=1pt, minimum size=2pt}}

\definecolor{myyellow}{RGB}{227, 222, 91}
\definecolor{myblue}{RGB}{103, 182, 219}
\definecolor{myorange}{RGB}{250, 195, 162}
\tikzset{cmdpyellow/.style={pattern={Lines[angle=45,distance={3pt/sqrt(2)}]}, pattern color=myyellow}}
\tikzset{cmdpblue/.style={pattern={Hatch}, pattern color=myblue}}
\tikzset{cmdporange/.style={pattern={Dots[distance=2pt]}, pattern color=myorange}}

\tikzset{cinit/.style={pattern={Lines[angle=45,distance={3pt/sqrt(2)}]},pattern color=myblue}}
\tikzset{cinite/.style={myblue}}

\newcommand{\nodebg}[1]{{\setlength{\fboxsep}{0.6pt}\colorbox{white}{#1}}}

\usepackage{todonotes}

\usepackage{multirow}
\usepackage{tabularx}
\usepackage{booktabs}
\usepackage{float}

\usepackage[vlined,linesnumbered]{algorithm2e}
\SetAlCapSkip{2ex plus 0.5ex minus 1ex}
\SetEndCharOfAlgoLine{}
\SetArgSty{textrm}
\SetKwInOut{Input}{Input}\SetKwInOut{Output}{Output}
\definecolor{commentcolor}{RGB}{60,114,26}

\SetCommentSty{commentstyle}
\SetKw{Break}{break}
\SetKwProg{Type}{type}{}{end}
\SetKwProg{Function}{function}{}{end}
\SetKwFor{ForEach}{foreach}{do}{}
\SetKwFor{WhileTrue}{repeat}{}{end}
\SetKw{Continue}{continue}
\SetKwRepeat{DoWhile}{repeat}{while}
\SetKwRepeat{DoUntil}{repeat}{until}

\usepackage{caption}
\usepackage{subcaption}
\usepackage{hyperref} 
\usepackage[nameinlink,noabbrev,capitalise]{cleveref}

\usepackage[framemethod=TikZ]{mdframed}
\mdfdefinestyle{theoremstyle}{%
linecolor=black,linewidth=1pt,%
innertopmargin=\topskip,
} 
\mdtheorem[style=theoremstyle]{problemstatement}{Problem}
\Crefname{problemstatement}{Problem}{Problems}
\crefname{problemstatement}{Problem}{Problems}
\crefname{algorithm}{algorithm}{algorithms}
\Crefname{algorithm}{Algorithm}{Algorithms}

\newcommand{\RR}{\mathbb{R}}

\newcommand{\dist}{\ensuremath{\mathit{Dist}}}
\newcommand{\supp}{\ensuremath{\mathit{Supp}}}
\newcommand{\set}[1]{\ensuremath{\{ #1 \}}}
\newcommand{\bigset}[1]{\ensuremath{\big\{ #1 \big\}}}

\newcommand*{\bigmid}{\mathrel{\bigm|}}

\newcommand{\iverson}[1]{\ensuremath{\big[\, #1 \,\big]}}

\newcommand{\mdp}{\ensuremath{\mathcal{M}}}
\newcommand{\mdpexa}{\ensuremath{\mdp_1}}
\newcommand{\mdpexb}{\ensuremath{\mdp_2}}
\newcommand{\act}{\ensuremath{\mathit{Act}}}
\newcommand{\sinit}{\ensuremath{{s}^{I}}}
\newcommand{\cinit}{\ensuremath{{C}^{I}}}
\newcommand{\Pbf}{\ensuremath{\mathbf{P}}}
\newcommand{\goal}{\ensuremath{G}}
\newcommand{\othergoal}{\ensuremath{T}}
\newcommand{\cond}{\ensuremath{E}}
\newcommand{\paths}[1][\mdp]{\ensuremath{\mathit{Paths}_{#1}}}
\newcommand{\last}{\ensuremath{\mathit{last}}}
\newcommand{\pr}[2]{\ensuremath{\mathrm{Pr}_{#1}^{#2}}}
\newcommand{\prms}{\pr{\mdp}{\sigma}}
\newcommand{\prss}{\pr{s}{\sigma}}
\newcommand{\prmmax}{\pr{\mdp}{\max}}
\newcommand{\prsmax}{\pr{s}{\max}}
\newcommand{\prmmin}{\pr{\mdp}{\min}}
\newcommand{\prsmin}{\pr{s}{\min}}
\newcommand{\ex}[2]{\ensuremath{\mathrm{Ex}_{#1}^{#2}}}
\newcommand{\exms}{\ex{\mdp}{\sigma}}
\newcommand{\exmmax}{\ex{\mdp}{\max}}
\newcommand{\exsmax}{\ex{s}{\max}}

\newcommand{\rew}{\ensuremath{\mathcal{R}}}
\newcommand{\pols}[1][]{\ensuremath{\Sigma_{#1}}}
\newcommand{\reach}[1][]{\ensuremath{\mathit{Reach}_{#1}}}
\newcommand{\exits}[1][]{\ensuremath{\mathit{Exits}}}
\newcommand{\sizeof}[1]{\ensuremath{\lVert #1 \rVert}}
\newcommand{\val}{V}

\newcommand{\tool}[1]{\textsc{#1}\xspace}
\newcommand{\storm}{\tool{Storm}}

\newlength\myheight
\newlength\mydepth
\settototalheight\myheight{Xygp}
\newcommand*\inlinegraphics[1]{%
  \settototalheight\myheight{Xygp}%
  \settodepth\mydepth{Xygp}%
  \raisebox{-0.5\mydepth}{\includegraphics[height=\myheight]{#1}}%
}

\newcommand{\refarxiv}[2]{#2} %

\newcommand{\onlyarxiv}[1]{\refarxiv{}{#1}}

\renewcommand{\paragraph}[1]{\smallskip\noindent\emph{#1}~}

\begin{document}
\title{Fast Computation of Conditional Probabilities in MDPs and Markov Chain Families}
\titlerunning{Fast Computation of Conditional Probabilities}
\author{%
Milan Češka\inst{1}\orcidlink{0000-0002-0300-9727} \and
Sebastian Junges\inst{2}\orcidlink{0000-0003-0978-8466} \and
Luko van der Maas\inst{2}\orcidlink{0009-0007-3915-6191} \and\\
Filip Macák\inst{1}\orcidlink{0009-0004-4277-2751} \and
Tim Quatmann\inst{3}\orcidlink{0000-0002-2843-5511}
}
\authorrunning{Češka et al.}
\institute{%
Brno University of Techntology, Brno, Czech Republic \and
Radboud University, Nijmegen, the Netherlands \and
RWTH Aachen University, Aachen, Germany
}
\maketitle              %
\begin{abstract}
Computing optimal conditional reachability probabilities in Markov decision processes (MDPs) is tractable by a reduction to reachability probabilities. Yet, this reduction yields cyclic, challenging MDPs that are often notoriously hard to solve.
We present an alternative, practically efficient method to compute optimal conditional reachabilities.
This new method is numerically stable, can decide the threshold problem in linear time on acyclic MDPs, and yields performance comparable to standard reachability queries. 
We also integrate the method in an abstraction-refinement framework to analyse millions of Markov chains at once. We demonstrate the efficacy of the new methods on benchmarks from Bayesian network analysis, probabilistic programs, and runtime monitoring and show speed-ups up to multiple orders of magnitude.
\end{abstract}
\section{Introduction}
Markov chains (MCs) are a standard model for stochastic processes. The standard reachability query on Markov chains is, e.g., `What is the probability that one reaches an airport before the plane departs?' As a variation, we study `What is the probability that one reaches the airport, if you already know that the buses will have at least a ten-minute delay?'  Such a query is a typical example for \emph{conditional reachability probabilities}~\cite{DBLP:conf/tacas/AndresR08}. 
Intuitively, conditional reachability asks for reachability probabilities not over all paths from the initial state, but over paths that match the given evidence. It can be used to implicitly define an alternative initial state, that is obtained from observing a running system.
Conditional reachability probabilities have prominent applications: 
In runtime monitoring, one aims to compute the probability of an imminent threat conditioned on the previous observations by the monitor~\cite{DBLP:conf/rv/StollerBSGHSZ11}. 
In Bayesian networks and MCs derived from them~\cite{DBLP:conf/qest/SalmaniK20}, one computes the probability of an event like a break-in conditioned on evidence like a burglary alarm.
In probabilistic programs and their MC semantics, one aims to compute the probability of a return value conditioned on known inputs, e.g.,~\cite{DBLP:journals/pacmpl/HoltzenBM20}. 
These examples of conditional reachability probabilities can be efficiently computed~\cite{DBLP:conf/tacas/BaierKKM14} on reasonably-sized MCs using modern probabilistic model checking tools~\cite{DBLP:journals/sttt/HenselJKQV22,DBLP:conf/tacas/ForejtKNPQ11}. In this paper, we  bring this performance to Markov decision processes and to Markov chain families by a fresh perspective. 

\subsubsection*{Towards MDPs.} 
Many systems cannot be modelled concisely as a Markov chain: 
Not all behavior can be captured probabilistically; probabilities may not be known, or the stochastic process must be abstracted by a smaller state space. 
These observations are standard motivations to add nondeterminism to the model formalism and to thus consider \emph{Markov Decision Processes} (MDPs). The ability to compute optimal (i.e., maximal/minimal) conditional reachability probabilities in MDPs then allows us to bound the conditional reachability probability in models for runtime monitors with nondeterministic agents~\cite{DBLP:conf/cav/JungesTS20} or with learned probabilities~\cite{luko-aamas}, Bayesian networks with interval-valued probabilities~\cite{DBLP:journals/jair/SalmaniK23}, or probabilistic programs under abstractions~\cite{DBLP:journals/pacmpl/ChoGH25} or with explicit nondeterminism~\cite{guertlerkaminski}.

\paragraph{A Bayesian perspective.}
Computing conditional reachability probabilities in MCs is straightforward by applying Bayes' theorem: The probability to reach the goal states $\goal$ conditioned on also reaching evidence states $\cond$ can be expressed as the fraction of two reachability probabilities: The probability to reach $\goal$ and $\cond$ divided by the probability to reach $\cond$, i.e.,
\[ \Pr(\lozenge \goal \mid \lozenge \cond) ~=~ \frac{\Pr(\lozenge \goal \cap \lozenge \cond)}{\Pr(\lozenge \cond)}. \]
The upper probability can be computed on the product of a small DFA and the original Markov chain.
In MDPs, however, we optimize over the policies that resolve the nondeterminism, which prevents calculating the numerator and denominator independently:
\[ \max_\sigma \frac{\Pr_\sigma(\lozenge \goal \cap \lozenge \cond)}{\Pr_\sigma(\lozenge \cond)} ~\neq~ \frac{\max_\sigma \Pr_\sigma(\lozenge \goal \cap \lozenge \cond)}{\max_\sigma \Pr_\sigma(\lozenge \cond)}.  \]

\paragraph{The restart method.} Baier \emph{et al.}~\cite{DBLP:conf/tacas/BaierKKM14} provide a mathematically elegant solution to this challenge, basically mimicking the key idea of rejection sampling. Any probability mass that never satisfies the condition $\cond$ cannot be reached must be redistributed (i.e., restarted). In terms of the MDP, this means that for all states from which $\cond$ cannot be reached, the transitions are rewired to the initial state. In these new (restart) MDPs, one simply computes the probability to eventually reach the goal! The downside is that this method introduces a lot of loops, in particular even when the original MDP was acyclic~\cite{DBLP:conf/sefm/MarckerB0K17}. MDPs with cycles back to the initial states are among the most challenging for methods based on value iteration~\cite{DBLP:journals/tcs/HaddadM18}. Indeed, some(!) restart MDPs are known as very challenging benchmarks of standard reachability queries~\cite{DBLP:conf/sefm/MarckerB0K17,DBLP:conf/tacas/HartmannsJQW23}.

\paragraph{A multiobjective perspective.}
Intuitively, we consider conditional reachability queries as a variation of a bi-objective model checking problem~\cite{DBLP:journals/lmcs/EtessamiKVY08}: Maximize the probability to reach $\goal$ and $\cond$, while minimizing the probability to reach~$\cond$. Consider the decision problem whether $\max_\sigma \Pr_\sigma(\lozenge \goal \mid \lozenge \cond) \geq \lambda$. 
Using Bayes' law, some reordering, and some side-conditions detailed below, this can be seen~as \[ \textstyle  \max_\sigma \Pr_\sigma(\lozenge \goal \cap \lozenge \cond) - \lambda \Pr_\sigma(\lozenge \cond) \geq 0. \]
This query can be efficiently computed (in practice and theory) using one so-called \emph{total reward query}~\cite{DBLP:conf/tacas/ForejtKNPQ11,DBLP:conf/tacas/QuatmannK21} and is a simple instance of a \emph{relational reachability property}~\cite{DBLP:conf/cav/GerlachWABJ25}. If the original MDP is acyclic this is possible in linear time with a numerically stable method. 

\paragraph{Optimal conditional probabilities.}
The decision method above gives rise to a method to approximate the optimal conditional reachability probability, using a standard bisection method. We accelerate this method using a smarter heuristic. Further, note that we also support computing the optimal conditional reachability probability. The main argument for the bisection to terminate is that the solution can be represented as a reachability probability~\cite{DBLP:conf/tacas/BaierKKM14} and that these must be a rational number representable by a bounded number of bits~\cite{DBLP:conf/spin/ChatterjeeH08}. We therefore adapt a routine that bisects this set of bounded rationals~\cite{DBLP:journals/fmsd/MathurBCSV20} and extend it with termination criteria based on the computed optimal policies.

\subsubsection*{Towards Markov chain families.}
Beyond MDPs, we are interested in analysing Markov chain families, i.e., large sets of Markov chains. Our interest is twofold: 
(1)~Markov chain families can capture uncertainty about the topology of a system and can capture systems that can be configured in various ways. Such families occur, e.g., when analysing (purely) probabilistic programs with non-deterministic inputs~\cite{DBLP:conf/tacas/BatzCJKKM23,DBLP:journals/pacmpl/ChoGH25}.
(2)~Given sets of states $\cond_1,\dots \cond_n$, we are interested whether there is some evidence $\cond_i$ such that $\Pr(\lozenge \goal \mid \lozenge \cond_i) \geq \lambda$. This question is natural in runtime monitoring~\cite{DBLP:conf/tacas/BadingsVJSJ24,DBLP:conf/atva/MaasJ25}, where evidence itself may be uncertain and programs that encode a secret input (e.g., ciphers, see Section~\ref{sec:experiments}), where we want to ensure a secret cannot be reasonably guessed from any possible evidence. 

\paragraph{Embedding into abstraction-refinement.}
We use a standard representation of Markov chain families using \emph{colored MDPs}, where every color-consistent policy of the MDP represents a Markov chain~\cite{DBLP:journals/jair/AndriushchenkoCMJK25}. 
To solve colored MDPs, we use abstraction-refinement~\cite{DBLP:journals/jair/AndriushchenkoCMJK25}, the state-of-the-art method for solving standard reachability queries in colored MDPs. However, the shift to conditional probabilities brings various challenges, that prevent simply applying previous methods: The optimal policies in (standard) MDPs are not memoryless and we only get a value for the initial state. Our experiments show that, nevertheless, abstraction-refinement can bound conditional probabilities on millions of MCs at once.

\subsubsection*{Related work.}
The previous state-of-the-art is the restart method, outlined above, which was experimented with in~\cite{DBLP:conf/sefm/MarckerB0K17}. The approach in this paper is completely different and rooted in multi-objective MDP model checking. To the best of our knowledge, the first MDP model checking algorithm for conditional reachability probabilities is presented in \cite{DBLP:conf/tacas/AndresR08}: For acyclic MDPs, it suggests a method that computes numerator and denominator under each policy, along with some pruning that is broadly related to multiobjective model checking. However, the approach remains exponential and highly inefficient. 

Our approach of tackling the threshold problem and using that as a subroutine also appears in work on conditional expected rewards~\cite{DBLP:conf/tacas/Baier0KW17}. Deciding the threshold problem for conditional expected rewards is already PSPACE-complete for acyclic MDPs, in stark contrast to our linear time algorithm. Whereas~\cite{DBLP:conf/tacas/Baier0KW17} focusses on establishing this and other complexity bounds, we present a concise algorithm with a focus on (practical) performance: We use value iteration over LPs, we efficiently extract policies, and use efficient bisection methods. %

The recent work~\cite{DBLP:journals/pacmpl/ChoGH25} develops novel probabilistic programming languages (PPLs) to solve various optimization problems over probabilistic inference. It explicitly remarks that computing conditional probabilities on MDPs directly is ill-supported. With this paper, we make reasoning about the operational semantics of PPLs with conditioning more feasible. 

\subsubsection*{Contributions.}
In summary, this paper yields a significant step forward in computing conditional probabilities on MDPs, building upon progress in (multiobjective) model checking from the last decade. In particular, we provide an empirically superior method for computing optimal conditional reachability probabilities in MDPs rooted in a reduction to a total reward query paired with an optimised bisection method.
The method is further equipped with a policy extraction technique that enables an effective integration into the abstraction refinement scheme to solve conditional reachability queries on colored MDPs. Our experiments on benchmarks from several application domains demonstrate speed-ups up to multiple orders of magnitude compared to the restart method~\cite{DBLP:conf/tacas/BaierKKM14}.

\section{Preliminaries}
The set of probability distributions over a finite set $S$ is $\dist(S) = \set{ \mu \colon S \to [0,1] \mid \sum_{s\in S} \mu(s) = 1 }$, 
$\supp(\mu) = \set{s \in S \mid \mu(s) > 0}$ is the support of $\mu \in \dist(S)$.

\begin{definition}
    A \emph{Markov decision process (MDP)} is a tuple $\mdp = (S,\act,\Pbf,\sinit)$, where $S$ and $\act$ are finite sets of states and actions, $\sinit$ is an initial state, and $\Pbf \colon S \times \act \rightharpoonup \dist(S)$ is a (partial) transition probability function such that
    $\act(s) = \set{ \alpha \in \act \mid \Pbf(s,\alpha) \text{ is defined }} \neq \emptyset$ for all $s \in S$.
\end{definition}
We fix an MDP $\mdp = (S,\act,\Pbf,\sinit)$.
$\sizeof{\mdp} = |\set{ (s,\alpha,s') \mid \Pbf(s,\alpha)(s') > 0}|$ denotes the size of $\mdp$. An MDP is \emph{acyclic} if all cycles in its transition graph are self-loops.
A \emph{path} of MDP $\mdp$ is an infinite sequence $\pi = s_0\xrightarrow{\alpha_0} s_1 \xrightarrow{\alpha_1} \dots$ with $\alpha_i \in \act(s_i)$ and $s_{i+1} \in \supp(\Pbf(s_i,\alpha_i))$ for all $i \ge 0$.
$\paths$ denotes the set of paths of $\mdp$.
A finite path $\hat\pi = s_0\xrightarrow{\alpha_0} \dots \xrightarrow{\alpha_{n-1}} s_n$ is a finite prefix of a path ending in state $\last(\hat\pi) = s_n$.
For a set of states $G \subseteq S$ we define the set of paths that eventually reach a state in $G$ as
\[
\lozenge \goal ~=~ \bigset{s_0\xrightarrow{\alpha_0} s_1 \xrightarrow{\alpha_1} \dots \in \paths \bigmid s_i \in \goal \text{ for some } i \ge 0}.
\]
A \emph{policy} $\sigma$ for $\mdp$ maps finite paths $\hat\pi$ of $\mdp$ to enabled actions $\sigma(\hat\pi) \in \act(\last(\hat\pi))$.
The set of all policies for $\mdp$ is denoted as $\pols[\mdp]$, when $\mdp$ is clear from context, we also write $\pols$.
A memoryless policy satisfies $\sigma(\hat\pi) = \sigma(\hat\pi')$ for all finite paths $\hat\pi$ and $\hat\pi'$ with $\last(\hat\pi) = \last(\hat\pi')$. 
As usual, we also write $\sigma\colon S \rightarrow \act$ for memoryless policies, mapping last states to actions.
MDP $\mdp$ and policy $\sigma$ for $\mdp$ induce a probability measure $\prms$ on measurable sets of paths defined via a standard cylinder set construction~\cite{BK08}.
For two measurable sets of paths $\Pi_1, \Pi_2 \subseteq \paths$, the conditional probability of $\Pi_1$ given $\Pi_2$ is defined as
\[
\prms(\Pi_1 \mid \Pi_2) ~=~ \nicefrac{\prms(\Pi_1 \cap \Pi_2)}{\prms(\Pi_2)}.
\]
The probability is undefined iff $\prms(\Pi_2) = 0$.
We define 
\[
\prmmax(\Pi_1 \mid \Pi_2) ~=~~ \sup \bigset{ \prms(\Pi_1 \mid \Pi_2) \bigmid \sigma \in \pols[\mdp] \text{ with } \prms(\Pi_2) \neq 0}.
\]
$\prmmin(\Pi_1 \mid \Pi_2)$ is defined similarly.
When $\mdp$ is clear from context and $s \in S$ is a state of $\mdp$, we also write $\prss$, $\prsmax$, and $\prsmin$ for the corresponding probability measures that arise from replacing the initial state of $\mdp$ by $s$. 

A \emph{reward function} $\rew \colon S \times \act \times S \to \RR$ for MDP $\mdp$ assigns a reward to transitions. The reward accumulated on a path 
$\pi = s_0\xrightarrow{\alpha_0} s_1 \xrightarrow{\alpha_1} \dots$ until reaching a set  $\othergoal \subseteq S$ is given by
$\rew_{\lozenge \othergoal}(\pi) = \sum_{0 \le i < \min \set{n \in \mathbb{N} \mid s_n \in \othergoal}} \rew(s_i, \alpha_i, s_{i+1})$.\footnote{In our setting, this sum always converges to a finite value. Generally, it might be infinite or undefined---due to possible mixtures of positive and negative rewards.}
The \emph{expected total reward} of $\mdp$ for $\rew_{\lozenge \othergoal}$ under a policy $\sigma \in \pols[\mdp]$ is the expected value of the accumulated reward: $
\exms(\rew_{\lozenge \othergoal}) ~=~ \int_{\pi} \rew_{\lozenge \othergoal}(\pi) \,d \prms(\pi)$.
As for $\prms$, we write $\exmmax$ and $\exsmax$ for the \emph{maximal} expected values (assuming initial state $s \in S$).
The values $\exsmax(\rew_{\lozenge \othergoal})$ for each $s \in S$ coincide with the least fixed point $\vec{x} \colon S \to \RR$ of the \emph{Bellman operator} $\Phi \colon (S \to \RR) \to (S \to \RR)$ given by
$
\Phi(\vec{x})(s) ~=~ \max_{\alpha \in \act(s)} \sum_{s' \in S} \Pbf(s,\alpha,s') \cdot \big(\vec{x}(s') + \rew(s,\alpha,s')\big)
$ for $s \notin \othergoal$ and $\Phi(\vec{x})(s) = 0$ for $s \in \othergoal$.
Common algorithms for computing $\exsmax(\rew)$ for each $s \in S$ are based on linear programming, policy iteration, and value iteration.
When mixtures of positive and negative rewards are present (as in our case), these algorithms commonly require collapsing end components as a preprocessing to ensure a \emph{unique} fixed point of the Bellman operator~\cite{DBLP:phd/us/Alfaro97}.

\section{Deciding Conditional Reachability Properties}
\label{sec:threshold}

This section presents our new approach for answering the following  problem.

\begin{problemstatement}[Conditional Reachability Threshold Problem]\label{prob:threshold}
\vspace{-1em}
Given an MDP $\mdp$ with sets of states $\goal,\cond \subseteq S$ and a threshold $\lambda \in [0,1]$, decide whether $\prmmax(\lozenge \goal \mid \lozenge \cond) \le \lambda$.
\end{problemstatement}
For the rest of the section, we fix an MDP $\mdp$ with state sets $\goal,\cond \subseteq S$.
We assume that the conditional probability $\prmmax(\lozenge \goal \mid \lozenge \cond)$ is defined, i.e., $\prms(\lozenge \cond) > 0$ for some $\sigma \in \pols[\mdp]$.
This assumption can be checked using standard qualitative MDP model checking algorithms~\cite{BK08}.
The restart method outlined in~\cite{DBLP:conf/tacas/BaierKKM14} provides a polynomial time algorithm for computing $\prmmax(\lozenge \goal \mid \lozenge \cond)$ and thus for solving \Cref{prob:threshold}.
Roughly, the approach constructs a new MDP $\mdp'$ from $\mdp$, where transitions leading to states that do not reach the condition $\cond$ are rewired to the initial state.
The conditional reachability probability $\prmmax(\lozenge \goal \mid \lozenge \cond) \le \lambda$ in $\mdp$ coincides with the (unconditional) reachability probability $\pr{\mdp'}{\max}(\lozenge \goal)$ in $\mdp'$.
The latter can be computed using standard MDP model checking methods.
Since $\mdp'$ has approximately the same size as $\mdp$, the linear programming approach yields a polynomial time algorithm for \Cref{prob:threshold}.

\begin{figure}[t]\centering
\begin{tikzpicture}[mdp]
\node[ps, init=above] (s0) {$s_0$};
\node[ps,above  right=0.5 and 0.6 of s0] (s1) {$s_{1}$};
\node[ps, below right=0.5 and 0.6 of s0] (s1p) {$s_{1}'$};
\node[ps, right=0.5 of s1] (s2) {$s_2$};
\node[ right=0.4 of s2] (d) {$\cdots$};
\node[ps, right=0.4 of d] (sn) {$s_n$};
\node[ps, right=0.5 of sn] (g) {$g$};
\node[ps, right=0.5 of s1p] (s2p) {$s_2'$};
\node[ right=0.4 of s2p] (dp) {$\cdots$};
\node[ps, right=0.4 of dp] (snp) {$s_n'$};
\node[ps, right=0.5 of snp] (h) {$e$};
\node[ps, below right=0.5 and 0.5 of s2] (b) {$s_\bot$};
\path[->,font=\scriptsize]
(s0) edge[bend left] node[above]{0.5} (s1)
(s0) edge[bend right] node[below]{0.5} (s1p)
(s1) edge node[above]{0.5}(s2)
(s2) edge node[above]{0.5}(d)
(d) edge node[above]{0.5}(sn)
(sn) edge node[above]{0.5}(g)
(g) edge[] node[left]{1}(h)
(s1p) edge node[below]{0.5}(s2p)
(s2p) edge node[below]{0.5}(dp)
(dp) edge node[below]{0.5}(snp)
(snp) edge node[below]{0.5}(h)
(h) edge[out=150,in=125,loop] node[above]{1}(h)
;
\path[->,font=\scriptsize]
(s1) edge[bend right=5] node[pos=0.2,below]{0.5}(b)
(s1p) edge[bend left=5] node[pos=0.2,above]{0.5}(b)
(s2) edge[] node[pos=0.3,right]{0.5}(b)
(s2p) edge[] node[pos=0.3,right]{0.5}(b)
(sn) edge[] node[pos=0.3,below]{0.5}(b)
(snp) edge[] node[pos=0.3,above]{0.5}(b)
(b) edge[loop right] node[right]{1}(b)
;
\end{tikzpicture}~
\begin{tikzpicture}[mdp]
\node[ps, init=above] (s0) {$s_0$};
\node[ps,above  right=0.5 and 0.6 of s0] (s1) {$s_{1}$};
\node[ps, below right=0.5 and 0.6 of s0] (s1p) {$s_{1}'$};
\node[ps, right=0.5 of s1] (s2) {$s_2$};
\node[ right=0.4 of s2] (d) {$\cdots$};
\node[ps, right=0.4 of d] (sn) {$s_n$};
\node[ps, right=0.5 of sn] (g) {$g$};
\node[ps, right=0.5 of s1p] (s2p) {$s_2'$};
\node[ right=0.4 of s2p] (dp) {$\cdots$};
\node[ps, right=0.4 of dp] (snp) {$s_n'$};
\node[ps, right=0.5 of snp] (h) {$e$};
\node[ps, below right=0.5 and 0.5 of s2] (b) {$s_\bot$};

\path[->,font=\scriptsize]
(s0) edge[bend left] node[above]{0.5} (s1)
(s0) edge[bend right] node[below]{0.5} (s1p)
(s1) edge node[above]{0.5}(s2)
(s2) edge node[above]{0.5}(d)
(d) edge node[above]{0.5}(sn)
(sn) edge node[above]{0.5}(g)
(g) edge[] node[left]{1}(h)
(s1p) edge node[below]{0.5}(s2p)
(s2p) edge node[below]{0.5}(dp)
(dp) edge node[below]{0.5}(snp)
(snp) edge node[below]{0.5}(h)
(h) edge[out=150,in=125,loop] node[above]{1}(h)
(b) edge[loop right] node[right]{1}(b)
;
\path[->,red,font=\scriptsize]
(s1) edge[bend right=5] node[pos=0.2,below]{0.5}(s0)
(s1p) edge[bend left=5] node[pos=0.2,above]{0.5}(s0)
(s2) edge[] node[pos=0.3,below]{0.5}(s0)
(s2p) edge[] node[pos=0.3,above]{0.5}(s0)
(sn) edge[bend left=5] node[pos=0.1,below]{0.5}(s0)
(snp) edge[bend right=5] node[pos=0.1,above]{0.5}(s0)
;
\end{tikzpicture}
\caption{Example MDPs $\mdpexa$ (left) and $\mdpexa'$ (right).}
\label{fig:restart}
\end{figure}
\begin{example}
\Cref{fig:restart} shows an \emph{acyclic} MDP $\mdpexa$ and the corresponding MDP $\mdpexa'$ that arises from applying the restart method for the conditional probability $\prmmax(\lozenge \set{g} \mid \lozenge \set{e})$.
In $\mdpexa'$, transitions that are rewired to the initial state are depicted in red. In $\mdpexa$, these transitions lead to $s_\bot$ which cannot reach $e$.
We have $\prmmax(\lozenge \set{g} \mid \lozenge \set{e}) ~=~ \pr{\mdpexa'}{\max}(\lozenge \set{g}) = 0.5$.
However, computing the reachability probability in $\mdpexa'$ is challenging due to its cyclic structure.
A similar MDP is used in \cite[Fig. 3]{DBLP:journals/tcs/HaddadM18} to illustrate convergence issues of value iteration.
\end{example}

We present an alternative approach for answering \Cref{prob:threshold} that does not introduce loops into the MDP structure.
Let $\pols[\mdp,\cond] = \set{ \sigma \in \pols \mid \prms(\lozenge \cond) > 0}$ denote the set of policies that reach $\cond$ with positive probability.
Our procedure is based on the following key observation (proof in \refarxiv{\cite{arxiv}}{\cref{proofs:threshold}}).

\begin{theorem}\label{thm:threshold}
For $\lambda \in [0,1]$ and ${\sim} \in \set{<,\le, =, \ge,>}$:
\[
\prmmax(\lozenge \goal \mid \lozenge \cond) ~\sim~ \lambda
\quad\textrm{iff}\quad
\max_{\sigma \in \pols[\mdp,\cond]} \Big(\prms(\lozenge \goal \cap \lozenge \cond) - \lambda \cdot \prms(\lozenge \cond)\Big) ~\sim~0.
\]
\end{theorem}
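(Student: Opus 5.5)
The plan is to reduce everything to a pointwise sign comparison, policy by policy, and then pass to the optimum. For $\sigma \in \pols[\mdp,\cond]$ write $f(\sigma) = \prms(\lozenge \goal \cap \lozenge \cond) - \lambda \cdot \prms(\lozenge \cond)$ and $r(\sigma) = \prms(\lozenge \goal \mid \lozenge \cond)$. Since $\prms(\lozenge\cond) > 0$, dividing gives $f(\sigma) = \prms(\lozenge \cond)\cdot\big(r(\sigma) - \lambda\big)$. Hence $f(\sigma)$ and $r(\sigma)-\lambda$ have the same sign for every such $\sigma$. Let $p^* = \prmmax(\lozenge\goal\mid\lozenge\cond) = \sup_{\sigma\in\pols[\mdp,\cond]} r(\sigma)$ and $F^* = \sup_{\sigma\in\pols[\mdp,\cond]} f(\sigma)$. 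It suffices to show that $p^* - \lambda$ and $F^*$ have the same sign (in $\{-,0,+\}$) and that $F^*$ is attained. All five relations ${\sim}$ then follow at once.

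The easy directions come first. If $p^* > \lambda$, some $\sigma$ has $r(\sigma) > \lambda$, so $f(\sigma) > 0$ and $F^* > 0$. If $p^* \le \lambda$, every $r(\sigma) \le \lambda$, so every $f(\sigma)\le 0$ and $F^*\le 0$. For $p^* = \lambda$ I need a policy attaining $p^*$. For this I invoke the restart reduction recalled above~\cite{DBLP:conf/tacas/BaierKKM14}. The maximal reachability probability in $\mdp'$ is attained by a memoryless policy, and I translate it back into a policy $\sigma$ of $\mdp$ that restarts its memory whenever a rewired transition is taken. Then $r(\sigma) = p^*$ and $\sigma\in\pols[\mdp,\cond]$, so $f(\sigma)=0$ and $F^* = 0$ is attained.

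The main obstacle is the strict case $p^* < \lambda$, where I must show that $F^*$ is attained and negative. The naive bound $f(\sigma)\le -(\lambda-p^*)\prms(\lozenge\cond)$ is not enough, because $\prms(\lozenge\cond)$ can be made arbitrarily small. A history-dependent policy can circle through a loop with probabilistic exits to non-$\cond$ states many times before heading to $\cond$. Such policies push $f(\sigma)$ towards $0$ from below, so over all of $\pols[\mdp,\cond]$ the supremum can be $0$ without being attained.

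To deal with this, I would view $f$ as an expected total reward objective and use that optimal values of total reward objectives are realised by memoryless deterministic policies after collapsing end components. The reward is $1-\lambda$ on entering $\cond\cap\lozenge\goal$-relevant product states and $-\lambda$ otherwise, using the small DFA product to track whether $\goal$ was seen. I would compare candidates only within this finite class of memoryless policies. There, $\min\{\prms(\lozenge\cond)\mid \sigma \text{ memoryless}, \prms(\lozenge\cond)>0\}>0$, so the maximum exists and is strictly negative.

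I expect to have to argue carefully that restricting to policies of this finite form loses nothing relevant, i.e., that this is the reading of ``$\max$'' under which the equivalence holds. For acyclic $\mdp$ this is automatic, since effectively only finitely many deterministic policies exist. In the cyclic case, I would first try to prove that an arbitrary $\sigma\in\pols[\mdp,\cond]$ can be replaced by a memoryless one with at least as large a ratio $r$, by means of the restart MDP. This transfers the sign argument to the finite class.
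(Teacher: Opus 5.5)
Your pointwise identity $f(\sigma)=\prms(\lozenge\cond)\,(r(\sigma)-\lambda)$ and your two easy directions are exactly the paper's proof. The paper stops there. It proves the case $\ge$ via ``$\exists\sigma\in\pols[\mdp,\cond]$ with $f(\sigma)\ge 0$ iff $\max_\sigma f(\sigma)\ge 0$'' and calls the other cases analogous. In doing so it tacitly uses two facts: that $\prmmax(\lozenge\goal\mid\lozenge\cond)$ is attained, which your restart argument supplies, and that the maximum of $f$ over $\pols[\mdp,\cond]$ exists, which nobody supplies.

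Your suspicion about the second fact is correct, and the defect is in the statement, not in your plan. Take $\sinit=s_0$ with two actions: $a$ leads to $s_0$ and to an absorbing $s_\bot$ with probability $\nicefrac12$ each, and $b$ leads to an absorbing $e$. Let $\cond=\set{e}$, $\goal=\emptyset$ and $\lambda=\nicefrac12$.
\begin{itemize}
\item $\prmmax(\lozenge\goal\mid\lozenge\cond)=0<\lambda$.
\item Every $\sigma\in\pols[\mdp,\cond]$ has $f(\sigma)=-\tfrac12\prms(\lozenge\cond)<0$.
\item Playing $a$ $n$ times before $b$ gives $f=-2^{-n-1}$.
\end{itemize}
So the supremum is $0$ and is not attained. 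Read as $\sup$, the equivalence fails for ${\sim}\in\set{<,=,\ge}$; read literally as $\max$, the right-hand side is undefined. Only $\le$ and $>$ hold with $\sup$. They need no attainment at all, and they are all that Problem~1 needs.

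When $\prmmax(\lozenge\goal\mid\lozenge\cond)\ge\lambda$ the maximum does exist. An optimal policy for $f$ over all of $\pols[\mdp]$ exists. Either it lies in $\pols[\mdp,\cond]$, or its value is $0$, in which case the conditionally optimal policy also achieves $0$. So the failure is confined to the case you singled out.

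This MDP is acyclic in the paper's sense, since its only cycles are self-loops. Your remark that the acyclic case is automatic therefore holds only if probabilistic self-loops are excluded.

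Consequently, the last step of your plan cannot prove the theorem as written. The sentence ``restricting to policies of this finite form loses nothing relevant'' is false for that statement: in the example, the maximum over memoryless policies in $\pols[\mdp,\cond]$ is $-\nicefrac12$, while the supremum over $\pols[\mdp,\cond]$ is $0$.

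What you should do is state the corrected theorem explicitly. For example, take the maximum over a finite set $\Sigma'\subseteq\pols[\mdp,\cond]$ that contains a policy attaining $\prmmax(\lozenge\goal\mid\lozenge\cond)$, such as memoryless policies on the product with the three memory modes, which suffice by the restart reduction. For that version you do not need the total-reward/end-component detour or the bound on $\min\prms(\lozenge\cond)$. Over a finite set the maximum exists trivially and the signs match pointwise, so the only substantive ingredient is the attainment fact you already take from the restart reduction.

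This is also, in effect, the reading the paper's algorithm implements. In $\tilde{\mdp}$ every policy commits to an exit of $\cinit$, so the leaking policies that cause the problem are gone.
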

Using \Cref{thm:threshold}, our approach is to answer \Cref{prob:threshold} by computing the value 
\[
\val(\lambda) ~\coloneq~\max_{\sigma \in \pols[\mdp,\cond]} \val^\sigma(\lambda) 
\quad \text{with} \quad
\val^\sigma(\lambda) ~\coloneq~
\Big(\prms(\lozenge \goal \cap \lozenge \cond) - \lambda \cdot \prms(\lozenge \cond)\Big)
\]
and comparing it against 0. The computation of $\val(\lambda)$ is done in two steps. (1)~Restate $\val^\sigma(\lambda)$ as an expected total reward. (2)~Construct an equivalent MDP $\mdp'$ for which the set $E$ is reached with positive probability under all policies (i.e., $\pols[\mdp',\cond] = \pols[\mdp']$).
After step~(2), standard MDP algorithms for expected total rewards can be employed to decide $\val(\lambda) \leq 0$.

\subsubsection*{Step 1: Reward Characterization.}
The value $\val(\lambda)$ above is characterized using an expected total reward for the reward function $\rew^\lambda$ for $\lambda \in [0,1]$, given as:
\[
\rew^\lambda(s,\alpha,s') ~=~ \begin{cases}
\pr{s'}{\max}(\lozenge \goal) - \lambda & \text{if } s' \in \cond \\
\pr{s'}{\max}(\lozenge \cond) \cdot (1-\lambda) &\text{if } s' \in \goal \setminus \cond \\
0 & \text{otherwise}
\end{cases}
\]
Intuitively, once a state in $\cond$ is reached, a policy $\sigma$ that maximizes $\val^\sigma(\lambda)$ as above must maximize the probability to also reach $\goal$.
Similarly, once a state in $\goal$ is reached, such a policy maximizes the probability to also reach $\cond$.

\begin{lemma}\label{lem:valueasrew}
    $\displaystyle \val(\lambda) = \max_{\sigma \in \pols[\mdp,\cond]} \exms \big(\rew^\lambda_{\lozenge \goal \cup \cond}\big)$
\end{lemma}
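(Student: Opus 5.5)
The plan is to compare, policy by policy, the two quantities $\val^\sigma(\lambda)$ and $\exms(\rew^\lambda_{\lozenge \goal\cup\cond})$ by splitting paths at the first visit to $\othergoal \coloneq \goal \cup \cond$. For this, fix $\sigma \in \pols[\mdp,\cond]$. For a finite path $\hat\pi$ that reaches $\othergoal$ for the first time in its last state $s' = \last(\hat\pi)$, write $\sigma[\hat\pi]$ for the residual policy $\hat\rho \mapsto \sigma(\hat\pi\hat\rho)$ started in $s'$.

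Paths that never visit $\othergoal$ also never visit $\cond$, so they contribute $0$ to both $\prms(\lozenge\goal\cap\lozenge\cond)$ and $\prms(\lozenge\cond)$. By the cylinder-set construction and the strong Markov property at the first hitting time of $\othergoal$, we can therefore write
\[
\val^\sigma(\lambda) ~=~ \sum_{\hat\pi} \prms(\hat\pi)\cdot c_\sigma(\hat\pi),
\]
where the sum ranges over all minimal prefixes $\hat\pi$ ending in $\othergoal$, and the summand $c_\sigma(\hat\pi)$ is defined by cases on $s' = \last(\hat\pi)$:
\begin{itemize}
\item if $s' \in \cond$, then $c_\sigma(\hat\pi) = \pr{s'}{\sigma[\hat\pi]}(\lozenge\goal) - \lambda$, since $\cond$ has already been reached;
\item if $s' \in \goal\setminus\cond$, then $c_\sigma(\hat\pi) = \pr{s'}{\sigma[\hat\pi]}(\lozenge\cond)\cdot(1-\lambda)$, since $\goal$ has already been reached.
\end{itemize}
In the same way, because $\rew^\lambda$ is nonzero only on the transition entering the first state of $\othergoal$, we get $\exms(\rew^\lambda_{\lozenge\othergoal}) = \sum_{\hat\pi}\prms(\hat\pi)\cdot r(\hat\pi)$. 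Here $r(\hat\pi)$ is obtained from $c_\sigma(\hat\pi)$ by replacing the residual probabilities with $\pr{s'}{\max}$. This sum is absolutely convergent because all terms are bounded by $1$.

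The inequality $\val(\lambda)\le\max_\sigma \exms(\rew^\lambda_{\lozenge\othergoal})$ is then immediate. The residual probabilities satisfy $\pr{s'}{\sigma[\hat\pi]}(\cdot)\le \pr{s'}{\max}(\cdot)$, and the coefficients $1$ and $1-\lambda$ in front of them are nonnegative. Hence $c_\sigma(\hat\pi)\le r(\hat\pi)$ termwise, and so $\val^\sigma(\lambda)\le\exms(\rew^\lambda_{\lozenge\othergoal})$ for every $\sigma$.

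For the converse, given $\sigma\in\pols[\mdp,\cond]$, we build $\sigma'$ as follows. It agrees with $\sigma$ until $\othergoal$ is first reached. From a first-hit state $s'\in\cond$ it then switches to a memoryless policy attaining $\pr{s'}{\max}(\lozenge\goal)$, and from $s'\in\goal\setminus\cond$ to one attaining $\pr{s'}{\max}(\lozenge\cond)$; such policies exist by standard MDP reachability theory. Then $\prms(\hat\pi)=\pr{\mdp}{\sigma'}(\hat\pi)$ for all the prefixes above and $c_{\sigma'}(\hat\pi)=r(\hat\pi)$. This gives $\val^{\sigma'}(\lambda)=\ex{\mdp}{\sigma'}(\rew^\lambda_{\lozenge\othergoal})=\exms(\rew^\lambda_{\lozenge\othergoal})$, where the last equality holds because the reward depends only on the prefix up to $\othergoal$.

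The remaining step is to check that $\sigma'\in\pols[\mdp,\cond]$. The mass with which $\sigma$ reaches $\cond$ splits into mass first hitting $\othergoal$ in $\cond$, which $\sigma'$ preserves exactly, and mass first hitting $\goal\setminus\cond$ and later reaching $\cond$. On the latter, $\sigma'$ uses residual probability $\pr{s'}{\max}(\lozenge\cond)\ge\pr{s'}{\sigma[\hat\pi]}(\lozenge\cond)$. Hence $\pr{\mdp}{\sigma'}(\lozenge\cond)\ge\prms(\lozenge\cond)>0$.

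I expect the main obstacle to be the bookkeeping rather than any single idea. First, the decomposition at the first hitting time must be stated cleanly, including paths that enter $\goal\cap\cond$ in one step, which fall into the $\cond$ case as the reward definition intends. Second, the ``$\max$'' over $\pols[\mdp,\cond]$ has to be read as a supremum, since $\pols[\mdp,\cond]$ is not closed. The two inequalities above show that the two suprema coincide whether or not they are attained, so the argument goes through with $\sup$ throughout.
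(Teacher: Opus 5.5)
Your proposal is correct and follows the paper's approach. The paper's sketch writes $\val^\sigma(\lambda)$ as the expectation of $f^\lambda(\pi) = \iverson{\pi\in\lozenge\goal\cap\lozenge\cond} - \lambda\cdot\iverson{\pi\in\lozenge\cond}$ and only asserts that the maximal expectations of $f^\lambda$ and $\rew^\lambda_{\lozenge\goal\cup\cond}$ coincide; your first-hitting decomposition, the termwise bound $c_\sigma(\hat\pi)\le r(\hat\pi)$, and the switch to optimal continuations (as in the sketch for \Cref{lem:absorbing}) are the details that justify it.

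One caveat applies to the lemma as stated, not just to your proof. If $\sinit\in\goal\cup\cond$, the sum defining the accumulated reward is empty, so your identity $\exms(\rew^\lambda_{\lozenge\goal\cup\cond})=\sum_{\hat\pi}\prms(\hat\pi)\cdot r(\hat\pi)$ fails for the length-zero prefix; the lemma therefore tacitly requires $\sinit\notin\goal\cup\cond$.
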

\begin{proof}[sketch]
  We can write $\val^\sigma(\lambda)$ as an expected value of a function $f^\lambda$ with
  \[
  f^\lambda(\pi) ~=~ \iverson{ \pi \in \lozenge \goal \cap \lozenge \cond } - \lambda \cdot \iverson{\pi \in \lozenge \cond}.
  \]
  The (maximal) expected values of $f^\lambda$ and $\rew^\lambda_{\lozenge \goal \cup \cond}$ coincide.
\end{proof}
We now exemplify how we apply the reward characterization to solve \Cref{prob:threshold}.

\begin{figure}[t]
    \centering
    \begin{tikzpicture}[mdp]
        \node[ps, init=left] (s1) {$s_1$};
        \node[ps, below=1.2 of s1] (s2) {$s_2$};
        \node[ps, below right=0.48 and 0.3 of s1] (s3) {$s_3$};
        \node[ps, right=2.0 of s1, label={right:$\goal$}] (s4) {$s_4$};
        \node[ps, right=2.0 of s2, label={right:$\cond$}] (s5) {$s_5$};
        \node[ps, right=0.55 of s3, label={above:$\cond$}] (s6) {$s_6$};
        \node[anode] (a12) at ($(s1)!0.5!(s2)$) {};
        \node[anode] (a14) at ($(s1)!0.5!(s4)$) {};
        \node[anode] (a23) at ($(s2)!0.5!(s3)$) {};
        \node[anode] (a25) at ($(s2)!0.5!(s5)$) {};
        \node[anode] (a3) at ($(s3)!0.5!(s1)$) {};
        \node[anode] (a36) at ($(s3)!0.5!(s6)$) {};
        \node[anode] (a54) at ($(s5)!0.5!(s4)$) {};
        \path[font=\scriptsize]
        (s1) edge node[left]{$\alpha$} (a12)
        (a12) edge[->] (s2)
        (a12) edge[->] (s3)
        (s2) edge node[above, xshift=-1mm]{$\alpha$} (a23)
        (a23) edge[->] (s3)
        (s3) edge node[left]{$\alpha$} (a3)
        (a3) edge[->, bend left=40] (s3)
        (s1) edge node[above]{$\beta$} (a14)
        (a14) edge[->] (s4)
        (s2) edge node[below]{$\beta$} (a25)
        (a25) edge[->] (s5)
        (a25) edge[->, bend right=20] (s2)
        (s3) edge node[above]{$\beta$} (a36)
        (a36) edge[->, bend right=20] node[below left=-0.11 and -0.11, near start]{$\frac{2}{3}$} (s5)
        (a36) edge[->] node[above]{$\frac{1}{3}$} (s6)
        (s5) edge (a54)
        (a54) edge[->] node[right]{$\frac{2}{3}$} (s4)
        (a54) edge[->] node[below]{$\frac{1}{3}$} (s6)
        (s4) edge[->, out=30, in=60, loop] (s4)
        (s6) edge[->, out=30, in=60, loop] (s6)
        ;
    \end{tikzpicture}~
    \begin{tikzpicture}[mdp]
        \node[ps, cinit, init=left] (s1) {\nodebg{$s_1$}};
        \node[ps, cinit, below=1.2 of s1] (s2) {\nodebg{$s_2$}};
        \node[ps, cinit, below right=0.48 and 0.3 of s1] (s3) {\nodebg{$s_3$}};
        \node[ps, cinit, right=2.0 of s1] (s4) {\nodebg{$s_4$}};
        \node[ps, right=2.0 of s2, label={right:$\othergoal$}] (s5) {$s_5$};
        \node[ps, right=0.55 of s3, label={right:$\othergoal$}] (s6) {$s_6$};
        \node[anode, cinite] (a12) at ($(s1)!0.5!(s2)$) {};
        \node[anode, cinite] (a14) at ($(s1)!0.5!(s4)$) {};
        \node[anode, cinite] (a23) at ($(s2)!0.5!(s3)$) {};
        \node[anode] (a25) at ($(s2)!0.5!(s5)$) {};
        \node[anode, cinite] (a3) at ($(s3)!0.5!(s1)$) {};
        \node[anode] (a36) at ($(s3)!0.5!(s6)$) {};
        \path[font=\scriptsize]
        (s1) edge[cinite] node[left]{$\alpha$} (a12)
        (a12) edge[->, cinite] (s2)
        (a12) edge[->, cinite] (s3)
        (s2) edge[cinite] node[above, xshift=-1mm]{$\alpha$} (a23)
        (a23) edge[->, cinite] (s3)
        (s3) edge[cinite] node[left]{$\alpha$} (a3)
        (a3) edge[->, cinite, bend left=40] (s3)
        (s1) edge[cinite] node[above]{$\beta$} (a14)
        (a14) edge[->, cinite] (s4)
        (s2) edge node[below]{$\beta$} (a25)
        (a25) edge[->] (s5)
        (a25) edge[->, bend right=20] (s2)
        (s3) edge node[above]{$\beta$} (a36)
        (a36) edge[->, bend right=20] node[below left=-0.11 and -0.11, near start]{$\frac{2}{3}$} (s5)
        (a36) edge[->] node[above]{$\frac{1}{3}$} (s6)
        (s5) edge[->, out=30, in=60, loop] (s5)
        (s4) edge[->, cinite, out=30, in=60, loop] (s4)
        (s6) edge[->, out=30, in=60, loop] (s6)
        ;
    \end{tikzpicture}~
    \begin{tikzpicture}[mdp]
        \node[ps] (sb) {$s_\bot$};
        \node[ps, below=0.3 of sb, label={right:$\othergoal$}] (s5) {$s_5$};
        \node[ps, below=1.2 of sb, label={right:$\othergoal$}] (s6) {$s_6$};
        \node[ps, init=left, left=1 of s5] (si) {$\sinit$};
        \node[anode] (ai6) at ($(si)!0.5!(s6)$) {};
        \node[anode] (aib) at ($(si)!0.5!(sb)$) {};
        \path[font=\scriptsize]
        (s5) edge[->, out=30, in=60, loop] (s5)
        (s6) edge[->, out=30, in=60, loop] (s6)
        (sb) edge[->, out=30, in=60, loop] (sb)
        (si) edge node[above, xshift=-1.5mm]{$(s_2,\beta)$} (aib)
        (aib) edge[->] node[above]{$\frac{1}{2}$} (sb)
        (aib) edge[->] node[above=-.05, near end]{$\frac{1}{2}$} (s5)
        (si) edge node[below, xshift=-1.5mm]{$(s_3,\beta)$} (ai6)
        (ai6) edge[->] node[below=-.05, near end]{$\frac{2}{3}$} (s5)
        (ai6) edge[->] node[below]{$\frac{1}{3}$} (s6)
        ;
    \end{tikzpicture}
    \caption{Example MDPs $\mdpexb$ (left), $\mdp_{2,\circlearrowright}$ (middle), 
    $\tilde{\mdpexb}$ (right). The initial component is shown in blue, together with all their actions that are not exits. All unmarked transition probabilities are uniformly distributed over all transitions of an action.}
    \label{fig:exmtilde}
\end{figure}
\begin{example}\label{ex:step1a}
    Consider the example in \Cref{fig:exmtilde}, for which we solve an instance of \Cref{prob:threshold} with the property $\pr{\mdpexb}{\max}(\lozenge \set{s_4} \mid \lozenge \set{s_5, s_6}) \leq \nicefrac{1}{2}$. By \Cref{thm:threshold} we need to compute $\val(\nicefrac{1}{2})$.
    Thus, we consider the reward function $\rew^{\nicefrac{1}{2}}$ on $\mdpexb$. 
    By definition we must compute the rewards for transitions into $\cond=\set{s_5, s_6}$, these gain a reward of $\pr{s_5}{\max}(\lozenge\set{s_4})-\nicefrac{1}{2} = \nicefrac{2}{3}-\nicefrac{1}{2} = \nicefrac{1}{6}$, $\pr{s_6}{\max}(\lozenge\set{s_4})=0-\nicefrac{1}{2} = -\nicefrac{1}{2}$ respectively, and transitions into $\goal=\set{s_4}$: $\pr{s_4}{\max}(\lozenge\set{s_5, s_6})\cdot(1-\nicefrac{1}{2}) = 0\cdot\nicefrac{1}{2} = 0$.
    The memoryless policy $\sigma(s_1) = \alpha, \sigma(s_2) = \beta, \sigma(s_3) = \alpha$ has maximal reward $\ex{\mdpexb}{\sigma}(\rew^{\lambda=\nicefrac{1}{2}}_{\lozenge\set{s_4,s_5,s_6}}) = \nicefrac{1}{12}$. The policy $\sigma$ reaches $\cond$ with positive probability, thus $\sigma \in \pols[\mdpexb,\cond]$,
    and since $\ex{\mdpexb}{\sigma}(\rew^{\lambda=\nicefrac{1}{2}}_{\lozenge\set{s_4,s_5,s_6}}) > 0$, the property does not hold (\Cref{thm:threshold}).
\end{example}

\subsubsection*{Step 2a: Preprocessing $\mdp$.}
The example above does not yet discuss a method to find a policy $\sigma \in \pols[\mdp,\cond]$. We first simplify the expected total reward query by considering the MDP $\mdp_\circlearrowright$ obtained from $\mdp$ by making the states in $\goal \cup \cond$ absorbing, i.e., all outgoing transitions of each state $s \in \goal \cup \cond$ are replaced by a self-loop $s \xrightarrow{\bot} s$.
For ease of notation, reward functions $\rew$ for $\mdp$ also apply to $\mdp_\circlearrowright$, where $\rew(s,\bot,s) = 0$ for all $s \in \goal \cup \cond$.
On $\mdp_\circlearrowright$, we consider a new set of terminal states
\[
\othergoal ~\coloneq~ \cond \cup \bigset{ s \in \goal \bigmid \prsmax(\lozenge \cond) > 0 }.
\]
Intuitively, paths of $\mdp$ that reach a state in $\goal \setminus \othergoal$ (without visiting a state in $\cond$ before) have reward $0$ and do not reach $\cond$.
\begin{example}\label{ex:step1b}
    Consider $\mdp_{2,\circlearrowright}$ in \Cref{fig:exmtilde}. We have $\othergoal = \set{s_5, s_6}$. The goal state $s_4 \in \goal$ is not part of $\othergoal$, since it has no path to the evidence set~$\cond = \set{s_5, s_6}$. 
\end{example}
The following lemma establishes that $\val(\lambda)$ can be obtained by computing
(i) the non-zero rewards of $\rew^\lambda$ and (ii) an optimal expected total reward on $\mdp_\circlearrowright$.
The former
can be derived using two standard MDP model checking calls to obtain the probabilities $\prsmax(\lozenge \goal)$ for each $s \in \cond$ and $\prsmax(\lozenge \cond)$ for each $s \in \goal \setminus \cond$.
\begin{lemma}\label{lem:absorbing}
    $\displaystyle \val(\lambda) = \max_{\sigma \in \pols[{\mdp_\circlearrowright,\othergoal}]} \ex{\mdp_\circlearrowright}{\sigma} \big(\rew^\lambda_{\lozenge \othergoal}\big)$
\end{lemma}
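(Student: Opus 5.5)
Starting from \Cref{lem:valueasrew}, I show that $\max_{\sigma \in \pols[\mdp,\cond]} \exms(\rew^\lambda_{\lozenge \goal\cup\cond})$ equals the right-hand side of the statement. The idea is to relate policies of $\mdp$ and of $\mdp_\circlearrowright$ through their behaviour up to the first visit of $\goal \cup \cond$. The reward $\rew^\lambda_{\lozenge \goal\cup\cond}$ depends only on the path prefix up to and including the first state in $\goal\cup\cond$. Transitions entering $\goal\cup\cond$ are identical in both MDPs, so this prefix distribution is the same in both.

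\emph{Step 1 (policy correspondence).} Every finite path of $\mdp$ that has not yet visited $\goal\cup\cond$ is also a finite path of $\mdp_\circlearrowright$, and vice versa. Hence a policy $\sigma$ of $\mdp$ induces a policy $\sigma'$ of $\mdp_\circlearrowright$: it copies $\sigma$ on such paths and picks $\bot$ afterwards. Conversely, a policy of $\mdp_\circlearrowright$ extends to $\mdp$ by choosing, after the first visit to $\goal\cup\cond$, any action, or later an optimal one. The cylinder measures on prefixes up to the first hit of $\goal\cup\cond$ coincide for $\sigma$ and $\sigma'$.

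\emph{Step 2 (equal rewards).} I compare the two accumulated rewards path by path. Take a path whose first hit of $\goal\cup\cond$ is at $s$. If $s\in\othergoal$, both sums stop at $s$ and are equal. If $s\in\goal\setminus\othergoal$, then $\prsmax(\lozenge\cond)=0$, so the entering reward $\pr{s}{\max}(\lozenge\cond)(1-\lambda)$ is $0$. In $\mdp$ the sum stops at $s$ with value $0$. In $\mdp_\circlearrowright$ the path loops at $s$ forever with reward $0$ and never reaches $\othergoal$, so its total is also $0$, because all rewards before $s$ are zero. If the path never hits $\goal\cup\cond$, both sums are $0$. Therefore $\exms(\rew^\lambda_{\lozenge \goal\cup\cond}) = \ex{\mdp_\circlearrowright}{\sigma'}(\rew^\lambda_{\lozenge\othergoal})$.

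\emph{Step 3 (matching the policy sets).} This is the main obstacle, since $\pols[\mdp,\cond]$ and $\pols[\mdp_\circlearrowright,\othergoal]$ are defined by different events.

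For ($\le$): let $\sigma\in\pols[\mdp,\cond]$. Paths reaching $\cond$ first hit $\goal\cup\cond$ either in $\cond$ or in a goal state $s$ from which $\cond$ is reached with positive probability, i.e.\ $\prsmax(\lozenge\cond)>0$ and so $s\in\othergoal$. Thus $\sigma'$ reaches $\othergoal$ with positive probability.

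For ($\ge$): let $\tau\in\pols[\mdp_\circlearrowright,\othergoal]$. I extend $\tau$ to $\sigma$ on $\mdp$ so that, after reaching $s\in\othergoal\cap\goal$, it follows a policy attaining $\prsmax(\lozenge\cond)>0$. Then $\sigma\in\pols[\mdp,\cond]$. The continuation does not affect the reward of $\sigma$, by Step 2, so the values agree.

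Taking maxima over the two policy sets then gives equality with $\val(\lambda)$ via \Cref{lem:valueasrew}. One caveat: if the maximum is really a supremum, the same argument applies to suprema.
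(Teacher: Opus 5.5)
Your proof is correct and takes essentially the same route as the paper's sketch. In one direction you mimic the policy until the first visit of $\goal \cup \cond$; in the other you continue from goal states in $\othergoal$ with a policy that reaches $\cond$ with positive probability, so the expected rewards coincide. Your path-by-path reward comparison (in particular, that entering a state of $\goal \setminus \othergoal$ yields reward $0$) and your explicit matching of $\pols[\mdp,\cond]$ with $\pols[\mdp_\circlearrowright,\othergoal]$ fill in details the paper leaves implicit.
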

\begin{proof}[sketch]
    Given a policy $\sigma \in \pols[\mdp,\cond]$, we obtain a policy $\sigma' \in \pols[\mdp_\circlearrowright, \othergoal]$ that mimics $\sigma$ until an absorbing state $s \in \goal \cup \cond$ is reached.
    Similarly, for a policy $\sigma' \in \pols[\mdp_\circlearrowright, \othergoal]$ a policy $\sigma \in \pols[\mdp,\cond]$ is constructed that mimics $\sigma'$ until a state in $\goal \cup \cond$ is reached from which point on the policy $\sigma$ mimics a policy that maximizes the probability to reach $\cond$.
    In both cases, we have
    $ \exms\big(\rew^\lambda_{\lozenge \goal \cup \cond}\big) = \ex{\mdp_\circlearrowright}{\sigma'} \big(\rew^\lambda_{\lozenge \othergoal}\big)$.
\end{proof}

\begin{remark}[Minimal Conditional Probabilities] \label{rem:minimal}
Our approach can be straightforwardly generalized to minimal conditional probabilities $\prmmin(\lozenge \goal \mid \lozenge \cond)$ except one edge case:
For the equivalent of \Cref{lem:absorbing}, we might have $\pols[\mdp_\circlearrowright, \othergoal] = \emptyset$ even if $\pols[\mdp,\cond] \neq \emptyset$. 
This case occurs if all paths that reach $\cond$ visit a state in $\goal \setminus \othergoal = \set{ s \in \goal \mid  \prsmin(\lozenge \cond) = 0}$ before. If this holds, we get $\prmmin(\lozenge \goal \mid \lozenge \cond) = 1$.
\end{remark}

\subsubsection*{Step 2b: Discard Policies with Undefined Value.}
Using our processing from Step 2a, we may assume an MDP $\mdp$ with terminal states $\othergoal \subseteq S$, all absorbing, and a reward function $\rew^\lambda \colon S \times \act \times S \to \RR$ where $\rew^\lambda(s,\alpha,s') \neq 0$ implies $s' \in \othergoal$.
We now provide an approach to decide whether $\max_{\sigma \in \pols[\mdp,\othergoal]} \exms(\rew^\lambda_{\lozenge \othergoal}) \sim 0$ for
${\sim} \in \set{<,\le, =, \ge,>}$.
Our procedure then allows to decide \Cref{prob:threshold} using \Cref{lem:absorbing,thm:threshold}.
The challenge is to limit the expected total reward analysis to policies $\sigma \in \pols[\mdp,\othergoal]$ that reach $\othergoal$ with non-zero probability.

For a policy $\sigma \in \pols[\mdp]$, the set 
$
\reach[\mdp](\sigma) ~=~ \bigset{ s \in S  \bigmid \prms(\lozenge \set{s}) > 0}
$ contains states that are reachable from the initial state $\sinit$ of $\mdp$ under $\sigma$ with non-zero probability. 
For $\sigma \in \pols[\mdp] \setminus \pols[\mdp,\othergoal ]$, the set $\othergoal$ is reached with probability 0, i.e., $\reach[\mdp](\sigma) \cap \othergoal = \emptyset$.
Furthermore, for each state $s \in \reach[\mdp](\sigma)$, there is a policy under which $\othergoal$ is reached from $s$ with probability 0, yielding
\[\cinit ~\coloneq~ 
\bigcup_{\sigma \in \pols[\mdp] \setminus \pols[\mdp,\othergoal]} \reach[\mdp](\sigma)
~\subseteq~ \bigset{ s \in S \bigmid \prsmin\big(\lozenge \othergoal) = 0}.
\]
We call the set $\cinit \subseteq S$ the \emph{initial component} of $\mdp$ and observe that strategies reaching $\othergoal$ must exit $\cinit$ eventually. We emphasise that $\cinit$ does not need to be a strongly connected component and is thus not a (traditional) end component.
\begin{example}\label{ex:step2b1}
    Consider $\mdp_{2,\circlearrowright}$ from \Cref{fig:exmtilde}. The initial component of $\mdp_{2,\circlearrowright}$ with states $\cinit = \{s_1, s_2, s_3, s_4\}$ is highlighted. It is not strongly connected. For each of its states there exists a policy which stays inside $\cinit$. For example for state $s_2$, the policy $\sigma(s_2)=\alpha,\sigma(s_3)=\alpha$ never reaches a state in $\othergoal = \set{s_5,s_6}$. 
\end{example}
\begin{lemma}\label{lem:exit_cinit}
    For all policies $\sigma \in \pols[\mdp]$:
    \[
    \prms(\lozenge \othergoal) > 0
\quad\text{iff}\quad
\sigma \in \pols[\othergoal,\mdp]
\quad\text{iff}\quad
\prms(\lozenge (S \setminus \cinit)) > 0. 
    \]
\end{lemma}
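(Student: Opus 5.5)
The first equivalence holds by definition: $\pols[\mdp,\othergoal]$ is the set of policies $\sigma$ with $\prms(\lozenge \othergoal) > 0$. It remains to show $\prms(\lozenge \othergoal) > 0$ iff $\prms(\lozenge (S \setminus \cinit)) > 0$, and I prove each direction separately. Throughout I use that $\othergoal \cap \cinit = \emptyset$, which follows from the inclusion $\cinit \subseteq \{ s \mid \prsmin(\lozenge \othergoal) = 0\}$ stated above, since every $s \in \othergoal$ has $\prsmin(\lozenge \othergoal) = 1$.

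($\Rightarrow$) Since $\othergoal \subseteq S \setminus \cinit$, we have $\lozenge \othergoal \subseteq \lozenge (S\setminus\cinit)$. Monotonicity of $\prms$ then gives $\prms(\lozenge (S\setminus\cinit)) \ge \prms(\lozenge \othergoal) > 0$.

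($\Leftarrow$) I argue by contraposition: suppose $\sigma \notin \pols[\mdp,\othergoal]$, i.e.\ $\prms(\lozenge \othergoal) = 0$. By the definition of $\cinit$ as a union over exactly such policies, $\reach[\mdp](\sigma) \subseteq \cinit$. Hence every $s \in S\setminus\cinit$ satisfies $\prms(\lozenge\{s\}) = 0$. The event $\lozenge(S\setminus\cinit)$ is the finite union $\bigcup_{s \in S\setminus \cinit} \lozenge\{s\}$, because $S$ is finite. Countable subadditivity then gives $\prms(\lozenge(S\setminus\cinit)) = 0$.

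Both directions are short. The only step needing care is the reverse direction, and there the only real content is checking that the definition of $\cinit$ directly bounds the reachable set of every policy that avoids $\othergoal$ almost surely. No construction of new policies is required. As a second, minor point, I would check that the subscript order $\pols[\othergoal,\mdp]$ in the statement is meant as $\pols[\mdp,\othergoal]$.
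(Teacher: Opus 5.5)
Your proof is correct and follows the paper's argument essentially step for step. The first equivalence holds by definition, the forward direction uses $\othergoal \subseteq S \setminus \cinit$, and the reverse direction goes by contraposition using $\reach[\mdp](\sigma) \subseteq \cinit$ for every $\sigma \notin \pols[\mdp,\othergoal]$. You are also right that $\pols[\othergoal,\mdp]$ in the statement is a typo for $\pols[\mdp,\othergoal]$.
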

\begin{proof}
   The first equivalence is by definition of $\pols[\othergoal,\mdp]$. Moreover,
$\prms(\lozenge \othergoal) > 0$ implies 
$\prms(\lozenge (S \setminus \cinit)) > 0$ because $\cinit$ does not contain any state in $\othergoal$, i.e., $\othergoal \subseteq (S \setminus \cinit)$.
Finally, $\prms(\lozenge \othergoal) = 0$ implies 
$\prms(\lozenge (S \setminus \cinit)) = 0$ because
for $\sigma \notin \pols[\othergoal,\mdp]$ we have $\reach[\mdp](\sigma) \subseteq \cinit$ by definition of $\cinit$ and  $\prms(\lozenge \set {s}) = 0$ for all $s \in  (S \setminus \reach[\mdp](\sigma)) \supseteq (S \setminus \cinit)$.
\end{proof}
If $\cinit = \emptyset$, there is no policy $\sigma \in \pols[\mdp] \setminus \pols[\mdp,\othergoal]$, i.e. $\pols[\mdp,\othergoal] = \pols[\mdp]$.
We assume $\cinit \neq \emptyset$.
\Cref{lem:exit_cinit} yields that $\sigma \in \pols[\othergoal,\mdp]$ iff there is some state $s \in \cinit \cap \reach[\mdp](\sigma)$ via which the initial component is exited, i.e. $(s,\sigma(s)) \in \exits(\cinit)$ for
\[
\exits(\cinit)~\coloneq~\bigset{ (s,\alpha) \in S \times \act \bigmid s \in \cinit, \alpha \in \act(s), \text{ and } \supp(\Pbf(s,\alpha)) \nsubseteq \cinit}.
\]
We construct a new MDP $\tilde{\mdp}$ from $\mdp$ where the choice of an exiting state-action pair is explicit, while ensuring that the maximal expected reward is positive in $\mdp$ iff it is positive in $\tilde{\mdp}$.

\begin{definition}
\label{def:initialcomp}
The elimination of the initial component $\cinit \neq \emptyset$ of $\mdp$ yields the MDP $\tilde{\mdp}=(\tilde{S} {=} (S \setminus \cinit) \cup \set{\sinit, s_\bot}, \act \cup \exits(\cinit) \cup \set{\bot}, \tilde{\Pbf}, \sinit)$, where $\tilde{\Pbf}$ is defined as follows---using $\mu_{s,\alpha} \in \dist(\tilde{S})$ with $\mu_{s,\alpha}(s') = \Pbf(s,\alpha)(s')$ for $s' \in S \setminus \cinit$ and $\mu_{s,\alpha}(s_\bot) = \sum_{s' \in \cinit}\Pbf(s,\alpha,s')$:
\[
\tilde{\Pbf}(s,\alpha) ~=~
 \begin{cases}
\mu_{s,\alpha} & \text{if } s \in S \setminus \cinit, \alpha \in \act,\\
\mu_{s_e,\alpha_e} & \text{if } s = \sinit, \alpha = (s_e,\alpha_e) \in \exits(\cinit).\\
\set{s_\bot \mapsto 1} & \text{if } s = s_\bot, \alpha = \bot.
\end{cases}
 \]
\end{definition}
The reward function $\rew^\lambda$ for $\mdp$ is straightforwardly applied to $\tilde{\mdp}$. In particular, $\rew(s,\alpha,s')$ only depends on $s'$ and only assigns non-zero values for $s' \in S \setminus \cinit$.
\begin{example}
The initial component $\cinit$ of $\mdp_{2,\circlearrowright}$ from \Cref{fig:exmtilde} (middle) has two exits: $(s_2,\beta)$ and $(s_3, \beta)$. Eliminating $\cinit$ results in $\tilde{\mdpexb}$ in \Cref{fig:exmtilde} (right). 
The transition probabilities of $(s_2, \beta)$ are modified such that the probability of going back into the initial component is instead redirected to the state $s_\bot$.
\end{example}

\begin{lemma}
For ${\sim} \in \set{<,\le, =, \ge,>}$:
    \[
    \max_{\sigma \in \pols[\mdp,\othergoal]} \exms(\rew^\lambda_{\lozenge \othergoal}) ~\sim~ 0
    \quad \text{iff} \quad
    \max_{\tilde{\sigma} \in \pols[\tilde{\mdp}]}
    \ex{\tilde{\mdp}}{\tilde{\sigma}}(\rew^\lambda_{\lozenge \othergoal}) ~\sim~ 0.
    \]
\end{lemma}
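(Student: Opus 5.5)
The plan is to show that the two maxima always have the same sign, though not necessarily the same value. Write $v^* = \max_{\tilde\sigma \in \pols[\tilde\mdp]} \ex{\tilde\mdp}{\tilde\sigma}(\rew^\lambda_{\lozenge\othergoal})$ and $v = \sup_{\sigma \in \pols[\mdp,\othergoal]} \exms(\rew^\lambda_{\lozenge\othergoal})$. I will establish two inequalities.

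\emph{Lower bound:} there is some $\sigma \in \pols[\mdp,\othergoal]$ and some $p>0$ with $\exms(\rew^\lambda_{\lozenge\othergoal}) = p\cdot v^*$.

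\emph{Upper bound:} every $\sigma \in \pols[\mdp,\othergoal]$ satisfies $\exms(\rew^\lambda_{\lozenge\othergoal}) \le v^* \cdot q_\sigma$ when $v^*\le 0$, where $q_\sigma = \prms(\lozenge(S\setminus\cinit))>0$ by \Cref{lem:exit_cinit}.

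Together these settle all cases of ${\sim}$.
\begin{itemize}
\item If $v^*>0$, the lower bound gives $v>0$.
\item If $v^*<0$, the upper bound gives $\exms(\rew^\lambda_{\lozenge\othergoal})<0$ for every $\sigma$. To conclude $v<0$ I also need that the maximum over $\pols[\mdp,\othergoal]$ is attained; I expect this to follow as for standard total rewards, or else I would bound $q_\sigma$ away from $0$ along an optimising sequence.
\item If $v^*=0$, the upper bound gives $v\le 0$ and the lower bound gives $v\ge 0$, with the lower-bound policy witnessing $v=0$.
\end{itemize}

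\emph{Lower bound.} Fix an optimal memoryless $\tilde\sigma$ for $\tilde\mdp$. Such a policy exists because $\tilde\mdp$ is a standard total-reward MDP whose rewards are only collected on entering the absorbing set $\othergoal$. It picks an exit $(s_e,\alpha_e)$ at $\sinit$. Since $s_e \in \cinit$, there is some $\sigma_0 \in \pols[\mdp]\setminus\pols[\mdp,\othergoal]$ with $s_e \in \reach[\mdp](\sigma_0)$. Define $\sigma$ as follows.
\begin{itemize}
\item Play $\sigma_0$ until $s_e$ is first visited; this happens with probability $p>0$.
\item At $s_e$, play $\alpha_e$, and from then on play $\tilde\sigma$ on states outside $\cinit$.
\item On any later entry into $\cinit$, switch to a policy that stays in $\cinit$ forever. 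On paths where $s_e$ is never visited, keep following $\sigma_0$, which remains in $\cinit$.
\end{itemize}
Staying in $\cinit$ forever is possible for every $s\in\cinit$: $s$ is reached by some non-$\othergoal$ policy, and that policy's continuation from $s$ stays inside $\cinit$ by \Cref{lem:exit_cinit}. Paths that remain in $\cinit$ collect reward $0$, and re-entering $\cinit$ corresponds exactly to moving to $s_\bot$ in $\tilde\mdp$, which also carries reward $0$. A path-by-path correspondence then yields $\exms = p\cdot v^*$, and in particular $\sigma\in\pols[\mdp,\othergoal]$ since $p>0$.

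\emph{Upper bound.} This is the step I expect to be the main obstacle, because a path in $\mdp$ may leave $\cinit$, re-enter it, and exit again. The key observation is that nonzero reward is only collected on transitions into $\othergoal$, and $\othergoal$ is absorbing. Hence every path collects reward in at most one excursion, namely a segment that starts with an exit from $\cinit$ and ends on entering $\othergoal$, re-entering $\cinit$, or staying outside forever. I will decompose $\exms$ as a sum over the finite prefixes $\hat\pi$ that end with an exit step $(s_e,\alpha_e)\in\exits(\cinit)$. Each term is $\prms(\hat\pi)$ times the expected reward of the ensuing excursion, and that excursion is faithfully simulated in $\tilde\mdp$ by the policy choosing $(s_e,\alpha_e)$ and then following the residual of $\sigma$. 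So each term is at most $\prms(\hat\pi)\cdot v^*$. When $v^*\le0$, every term is nonpositive, so I can drop all terms except those for first exits. Their probabilities sum to $q_\sigma$, which gives $\exms \le v^*\cdot q_\sigma$. The technical care lies in justifying this countable decomposition by absolute convergence, which holds because the rewards are bounded and at most one reward is collected per path.
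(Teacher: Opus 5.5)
Your argument is correct and follows the paper's proof, made more precise. Your lower bound is the paper's converse construction: leave $\cinit$ only via $\tilde\sigma(\sinit)$, and all other paths earn $0$. Your excursion bound $\exms(\rew^\lambda_{\lozenge\othergoal}) \le v^* q_\sigma$ for $v^*\le 0$ is a quantitative replacement for the paper's case split on the signs of the exit values $\mathit{val}(s,\alpha)$. Unlike the sketch, it explicitly handles paths that leave and re-enter $\cinit$. (Minor: if an exit step means choosing an action in $\exits(\cinit)$, the first-exit probabilities sum to at least $q_\sigma$, not exactly $q_\sigma$; since $v^*\le 0$ the bound still follows.)

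The step that does not go through as you suggest is attainment of the maximum when $v^*<0$. Neither of your fixes works: the constraint $q_\sigma>0$ is not closed, and the supremum over $\pols[\mdp,\othergoal]$ need not be attained. Here is an example.
Let $\sinit$ have one action leading to $\sinit$ or $s_1$ with probability $\frac12$ each. Let $s_1$ have a self-loop and an action $b$ into some $t\in\othergoal$ with incoming reward $r<0$ (e.g.\ $t\in\cond$, $\pr{t}{\max}(\lozenge\goal)=0$, $\lambda>0$). Then $\cinit=\{\sinit,s_1\}$ and $v^*=r$, and every $\sigma\in\pols[\mdp,\othergoal]$ has value $r\cdot q_\sigma<0$. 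But playing $b$ only when $s_1$ was first entered at step $n$ gives $q_\sigma=2^{-n}$, so the supremum is $0$ and is not attained.

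Hence, with $\max$ read as $\sup$, the lemma fails for ${\sim}\in\{<,=,\ge\}$, and no argument can close this case. The paper avoids the issue only by fiat, starting from a policy that maximises over $\pols[\mdp,\othergoal]$. You must likewise treat existence of the maximum as part of the statement. Then your case is immediate: the maximiser $\sigma$ satisfies $\exms(\rew^\lambda_{\lozenge\othergoal})\le v^* q_\sigma<0$.

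For $v^*\ge 0$ nothing needs to be assumed. When $v^*=0$, your witness attains $0$. When $v^*>0$, an optimal policy over all of $\pols[\mdp]$ has positive value and therefore lies in $\pols[\mdp,\othergoal]$.
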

\begin{proof}[sketch]
For the implication, let $\sigma \in \pols[\mdp,\othergoal]$ be a policy of $\mdp$ that maximises the expected reward $\exms(\rew^\lambda_{\lozenge \othergoal})$ among all policies in $\pols[\mdp,\othergoal]$.
We construct a policy $\tilde{\sigma}$ of $\tilde{\mdp}$ such that the comparisons of the induced expected rewards to zero coincide.
To see this, consider a case distinction on $\mathit{val}(s,\alpha)$ defined for $(s,\alpha) \in \exits(\cinit)$ as 
$\mathit{val}(s,\alpha) ~=~ \sum_{s' \in S \setminus \cinit}  \Pbf(s,\alpha,s') \cdot ( \ex{s'}{\max}(\rew^\lambda_{\lozenge T}) + \rew^\lambda(s,\alpha,s') )$:
\begin{itemize}
\item If $\mathit{val}(s,\alpha) > 0$ for some $(s,\alpha)\in \exits(\cinit)$, then the maximal expected reward in $\mdp$ is positive (as $\sigma$ is optimal, it will prefer staying inside the initial component $C^I$ and collect zero reward over picking an exit with negative value). Policy $\tilde{\sigma}$ for $\tilde{\mdp}$ mimics this by picking that exit in $s^I$.
\item If $\mathit{val}(s',\alpha') \leq 0$ for all $(s',\alpha')\in \exits(\cinit)$ and $\mathit{val}(s,\alpha)=0$ for some $(s,\alpha)\in \exits(\cinit)$, the maximal expected reward in $\mdp$ is 0 due to a similar argument and $\tilde{\sigma}$ for $\tilde{\mdp}$ can select an action at $s^I$ with zero value.
\item Otherwise, if $\mathit{val}(s,\alpha)<0$ for all $(s,\alpha) \in \exits(\cinit)$, all policies induce a negative value and the choice of $\tilde{\sigma}$ does not matter.
\end{itemize}
For the converse, any policy $\tilde{\sigma} \in \pols[\tilde{\mdp}]$ can be converted to a policy $\sigma \in \pols[\mdp,\othergoal]$ such that 
the comparisons of the induced expected rewards to zero coincide.
In particular, if $\tilde{\sigma}(\sinit) = (s,\alpha)$, then $\sigma$ mimics this by only exiting $\cinit$ via the exit $(s,\alpha)$. Paths of $\mdp$ that do not take that exit gain zero reward.
\end{proof}

\begin{example}\label{ex:step2b2}
Recall the reward function $\rew^{\lambda}$ for $\lambda=\nicefrac{1}{2}$ of $\mdp_{2,\circlearrowright}$ from \cref{ex:step1a}.
Applying $\rew^{\lambda}$ on $\tilde{\mdpexb}$ gives a reward of $\nicefrac{1}{6}$ to the transitions into $s_5$, and a reward of $-\nicefrac{1}{2}$ to the transitions into $s_6$.
The maximum expected reward is attained by the policy $\sigma(\sinit)=(s_2,\beta)$, which induces $\ex{\mdpexb}{\sigma}(\rew^{\lambda}_{\lozenge \set{s_5, s_6}}) = \nicefrac{1}{3}$. 
    While this value is not equal to the value found in \cref{ex:step1a}, it is on the same side of zero.
\end{example}

\subsubsection*{Efficiently Handling Acyclic MDPs.}
For acyclic MDPs, expected total rewards and reachability probabilities can be computed in linear time %
with dynamic programming~\cite{DBLP:books/wi/Puterman94}.
Our constructions preserve acyclicity.

\begin{proposition}\label{thm:acyclic}
For acyclic MDPs, \Cref{prob:threshold} can be solved in $\mathcal{O}(\sizeof{\mdp})$ time.
\end{proposition}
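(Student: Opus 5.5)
The plan is to walk through the pipeline of this section — \Cref{thm:threshold}, \Cref{lem:valueasrew}, \Cref{lem:absorbing}, the elimination of the initial component, and a final expected total reward computation — and check two things for each step. First, the step runs in $\mathcal{O}(\sizeof{\mdp})$ time. Second, it maps an acyclic MDP (in the paper's sense: the only cycles are self-loops) to an acyclic MDP of size $\mathcal{O}(\sizeof{\mdp})$. Since the pipeline has a constant number of steps, the total time is linear. Throughout I assume the standard unit-cost model for arithmetic on the probabilities. I also treat $\lambda$ as a fixed input.

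\textbf{Preprocessing.} First I check that the conditional probability is defined. This is plain graph reachability from $\sinit$ to $\cond$ and takes linear time. Next I compute the reward values: $\prsmax(\lozenge \goal)$ for $s \in \cond$ and $\prsmax(\lozenge \cond)$ for $s \in \goal \setminus \cond$. Each is a maximal reachability probability on an acyclic MDP. I compute it by dynamic programming over a reverse topological order of the transition graph with self-loops ignored. A self-loop of $s$ under $\alpha$ with $\Pbf(s,\alpha,s)=p<1$ is handled by renormalising, $x(s)=\max_\alpha \frac{1}{1-p}\sum_{s'\neq s}\Pbf(s,\alpha,s')x(s')$. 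Actions with $p=1$ contribute $0$, or $1$ if $s$ is a target. Each transition is touched once, so this is $\mathcal{O}(\sizeof{\mdp})$.

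\textbf{The constructions.} Building $\mdp_\circlearrowright$ only removes transitions and adds self-loops, so it stays acyclic. The set $\othergoal$ then follows from the values just computed. For the initial component, I first compute $Z=\set{s \mid \prsmin(\lozenge\othergoal)=0}$ with the standard linear-time graph algorithm. By definition, $\cinit$ is the set of states reachable from $\sinit$ under some policy that avoids $\othergoal$ with probability 0. I claim this equals the set of states reachable from $\sinit$ inside $Z$ using only actions whose support lies in $Z$ (provided $\sinit \in Z$; otherwise $\cinit=\emptyset$). This holds because staying forever in $Z$ is realisable: every state of $Z$ has such an action. The claim yields one forward search, so the $\exits(\cinit)$ set and $\tilde{\mdp}$ can be built in one pass over the transitions. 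I expect this characterisation of $\cinit$ to be the main obstacle. It is where I need to argue carefully from the definition of $\cinit$ as a union of reachable sets, and to justify that the construction of \Cref{def:initialcomp} is well defined and linear. For acyclicity of $\tilde{\mdp}$: the new state $\sinit$ has no incoming edges, $s_\bot$ has only its self-loop, and the remaining edges are a subgraph of those of $\mdp_\circlearrowright$. So $\tilde{\mdp}$ is acyclic.

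\textbf{Final step.} On $\tilde{\mdp}$, the rewards are nonzero only on transitions into the absorbing set $\othergoal$. Every state either reaches an absorbing state or is $s_\bot$. So the expected total reward is finite and equals the unique solution of the Bellman equation. I compute it by one backward dynamic-programming pass, using the same self-loop renormalisation as above. Here self-loop actions with probability $1$ contribute value $0$, since they collect no reward. This pass takes $\mathcal{O}(\sizeof{\tilde{\mdp}})=\mathcal{O}(\sizeof{\mdp})$ time, and no end-component collapsing is needed because acyclicity leaves only trivial end components. Finally, I compare the value at $\sinit$ with $0$. By the preceding lemma, \Cref{lem:absorbing}, and \Cref{thm:threshold}, this comparison decides \Cref{prob:threshold}.
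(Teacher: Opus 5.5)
Your proposal is correct and follows the paper's own argument: linear-time dynamic programming for the reachability and total-reward computations, plus the observation that every construction preserves acyclicity. It also fills in details the paper leaves implicit, notably the linear-time characterisation of $\cinit$ via the sure-avoidance set $Z$. One small wording fix: the Bellman equation is not uniquely solvable when probability-$1$ self-loops occur outside $\othergoal$ (e.g.\ at $s_\bot$), but your explicit rule of valuing such actions at $0$ picks the correct solution, so the algorithm is unaffected.
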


\subsubsection*{Policy Extraction.}
\label{sec:policy-extraction}
Applications---for example planning problems, colored MDPs (\Cref{sec:colored-mdps}), and our policy tracking optimization from \Cref{sec:valueproblem}---often require extracting an optimal policy on $\mdp$ which witnesses $\prmmax(\lozenge \goal \mid \lozenge \cond) \sim \lambda$, i.e., $\val(\lambda) \sim 0$.
In contrast to standard reachability probabilities,
memoryless policies are not sufficient for conditional reachability~\cite{DBLP:conf/tacas/AndresR08}. Inspired by~\cite{DBLP:conf/tacas/AndresR08,DBLP:conf/tacas/BaierKKM14}, we split the behavior of the policy into three memory states: i)~neither $\goal$ nor $\cond$ states were reached yet, ii) some state $s \in \goal$ was reached, and iii)~some state $s \in \cond$ was reached. Intuitively, in ii)~and iii)~we seek the optimal policy for query $\prmmax(\lozenge \cond)$ and  $\prmmax(\lozenge \goal)$, respectively. 
In case i), we  directly use $\tilde{\sigma}(s)$ obtained by solving $\tilde{\mdp}$ for states $s \notin \cinit$.
Finally, we construct the policy for states in $\cinit$. The action $\tilde{\sigma}(\sinit)$ defines the desired initial component exit, i.e., if $\tilde{\sigma}(\sinit) = (s_{e},\alpha_{e})$, then the constructed policy chooses $\alpha_{e}$ in state $s_{e}$. For the other states in $\cinit$ we compute a policy that reaches $s_{e}$ with non-zero probability while only taking actions that stay in~$\cinit$, which can be implemented by standard graph algorithms.

\begin{example}
    Continuing from \Cref{ex:step2b2}, the policy $\tilde{\sigma}(\sinit)=(s_2,\beta)$ induced the maximum value of $\val(\nicefrac{1}{2})$ in $\tilde{\mdpexb}$. The policy for $\mdpexb$ will have three memory states: $\sigma'\colon \set{m^\emptyset, m^{\cond}, m^{\goal}} \to S \to \act$. $\sigma'(m^{\cond})$ and $\sigma'(m^{\goal})$ (case ii and iii) take the only available action in the goal and evidence states. For $\sigma'(m^{\emptyset})$ (case i), we assign $\sigma'(m^{\emptyset})(s_2) = \beta$ as it is the chosen exit. The other initial component states get assigned the actions $\sigma'(m^{\emptyset})(s_1) = \alpha$ and $\sigma'(m^{\emptyset})(s_3) = \alpha$ since $\sigma'$ needs to reach $s_2$ and otherwise stay in $\cinit$ to avoid reaching an evidence state.
\end{example}

\section{Computing Optimal Conditional Reachability}
\label{sec:valueproblem}
This section concerns optimal conditional reachability probabilities:
\begin{problemstatement}[Conditional Reachability Optimization Problem]\label{prob:optimization}
\vspace{-1em}
Given an MDP $\mdp$ with sets of states $\goal,\cond \subseteq S$ and a precision $\varepsilon \in [0,1]$, compute a value $\lambda$ with $\lvert\lambda - \prmmax(\lozenge \goal \mid \lozenge \cond)\rvert \le \varepsilon$.
\end{problemstatement}

\begin{algorithm}[t]
\Input{MDP $\mdp$ with $\goal,\cond \subseteq S$, $\varepsilon \in [0,1]$}
\Output{$\lambda \in [0,1]$ with $\lvert\lambda - \prmmax(\lozenge \goal \mid \lozenge \cond)\rvert \le \varepsilon$}
$\ell \gets 0\,;~u \gets 1$ \tcp{Invariant: $\ell \le \prmmax(\lozenge \goal \mid \lozenge \cond) \le u$}
\While{$(u-\ell)/2 > \varepsilon$}{%
$\lambda \gets (\ell+u) / 2$ \label{alg:line:candidate}\tcp{current candidate}
\tcp{using $\val(\lambda) = \max_{\sigma \in \pols[\mdp,\cond]} \big(\prms(\lozenge \goal \cap \lozenge \cond) - \lambda \cdot \prms(\lozenge \cond)\big) $}%
\lIf(\tcp*[h]{$\lambda$ is a lower bound\label{line:upl}}){$\val(\lambda) \ge 0$}{$\ell \gets \lambda$}
\lIf(\tcp*[h]{$\lambda$ is an upper bound\label{line:upu}}){$\val(\lambda) \le 0$}{$u \gets \lambda$}
}
\Return $(\ell+u)/2$
    \caption{Bisection for Computing Optimal Conditional Probabilities}
    \label{alg:bisection}
\end{algorithm}

Our approach for \Cref{prob:optimization} embeds the idea of \Cref{prob:threshold} into a bisection algorithm outlined in \Cref{alg:bisection}.
The algorithm maintains a lower bound $\ell$ and an upper bound $u$ on the optimal conditional probability. In each iteration, we check $\prmmax(\lozenge \goal \mid \lozenge \cond) \sim \lambda$ for ${\sim} \in \set{\le, \ge}$ and the midpoint $\lambda = (\ell + u)/2$ using our techniques from \Cref{sec:threshold}.
The bounds $\ell$ and $u$ are updated accordingly, halving the width of the interval $[\ell, u]$ in each step.

\begin{proposition}
\Cref{alg:bisection} answers \Cref{prob:optimization} and for $\varepsilon > 0$ terminates within $\lceil\log_{2}(1/(2\cdot \varepsilon))\rceil$ iterations.
\end{proposition}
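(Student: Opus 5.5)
The plan is to prove correctness by a loop invariant and termination by counting how the interval width shrinks. Write $p^\ast = \prmmax(\lozenge \goal \mid \lozenge \cond)$; it is defined by the standing assumption of \Cref{sec:threshold}. The invariant is $\ell \le p^\ast \le u$, and in addition after $k$ iterations the width satisfies $u-\ell \le 2^{-k}$.

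\textbf{Initialisation and the invariant.} Initially $\ell=0$ and $u=1$. Since $p^\ast$ is a supremum of probabilities, $p^\ast\in[0,1]$, so the invariant holds. In an iteration with candidate $\lambda=(\ell+u)/2\in[0,1]$, apply \Cref{thm:threshold} with ${\sim}={\ge}$: if $\val(\lambda)\ge 0$, then $p^\ast\ge\lambda$, so setting $\ell\gets\lambda$ keeps the invariant. Similarly, with ${\sim}={\le}$, $\val(\lambda)\le 0$ gives $p^\ast\le\lambda$, so setting $u\gets\lambda$ is sound. At least one of the two conditions holds, because $\val(\lambda)$ is a real number; I will note that it is finite since it lies in $[-1,1]$. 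So at least one bound moves to the midpoint. If both hold, then $\ell=u=\lambda=p^\ast$ and the invariant is preserved trivially. In either case the new width is at most half the old width, which proves $u-\ell\le 2^{-k}$ by induction. The values $\val(\lambda)$ are computable by the procedure of \Cref{sec:threshold}, namely \Cref{lem:valueasrew,lem:absorbing} and the elimination of $\cinit$.

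\textbf{Correctness on exit.} When the loop stops, $(u-\ell)/2\le\varepsilon$. The returned midpoint $m=(\ell+u)/2$ then satisfies $|m-p^\ast|\le (u-\ell)/2\le\varepsilon$, because $p^\ast\in[\ell,u]$. This also covers $\varepsilon=0$ whenever the loop terminates; the iteration bound is claimed only for $\varepsilon>0$.

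\textbf{Iteration bound.} Let $\varepsilon>0$ and $K=\lceil\log_2(1/(2\varepsilon))\rceil$. If $\varepsilon\ge 1/2$, then $K\le 0$ and the guard $1/2>\varepsilon$ fails immediately, so there are zero iterations. Otherwise, after $K$ iterations we have $(u-\ell)/2\le 2^{-K-1}\le 2^{-\log_2(1/(2\varepsilon))-1}=\varepsilon$, so the guard fails and at most $K$ iterations are executed.

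The main subtlety is the equality case, where both updates fire. One has to check that this does not break the invariant, and indeed it only collapses the interval. A second point to verify is that $\lambda$ always stays in $[0,1]$, as \Cref{thm:threshold} requires. This holds because the midpoint of a subinterval of $[0,1]$ is again in $[0,1]$.
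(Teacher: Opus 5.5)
Your proof is correct and follows the argument the paper only sketches in the text around \Cref{alg:bisection}. You maintain the invariant $\ell \le \prmmax(\lozenge \goal \mid \lozenge \cond) \le u$ via \Cref{thm:threshold} with ${\sim} \in \{\ge,\le\}$, observe that the interval width halves each iteration (or collapses to zero when $\val(\lambda)=0$), and derive both the $\varepsilon$-accuracy of the returned midpoint and the iteration bound from this; treating $\varepsilon=0$ as partial correctness only is appropriate, since the paper handles exact termination separately via the Stern--Brocot candidate selection.
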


We discuss three modifications of the bisection algorithm.
The first handles exact computations, i.e., $\varepsilon = 0$. The latter two aim to accelerate convergence.
All three modifications can be combined.
\Cref{sec:experiments} empirically compares them. 

\subsubsection{Exact Computation.}
We exploit that $\prmmax(\lozenge \goal \mid \lozenge \cond)$ is a rational number to enable exact computations.
Specifically, we adapt the candidate selection in \Cref{alg:line:candidate} and pick the rational in the interval $[\ell, u]$ with minimal denominator instead.
The resulting procedure yields a traversal of the \emph{Stern-Brocot tree}~\cite{DBLP:books/aw/GKP1994} which allows us to find every rational in finitely many steps.
To ensure progress, our implementation decides heuristically in each iteration whether (i) a candidate close to the middle of the search interval $[\ell,u]$ is taken or (ii) whether a candidate in $[\ell,u]$ with minimal denominator is taken.

\subsubsection{Terminating Early Through Policy Tracking}\label{sec:poltracking}
For $\sigma \in \pols[\mdp,\cond]$, we write $\val^\sigma(\lambda) = \prms(\lozenge \goal \cap \lozenge \cond) - \lambda \cdot \prms(\lozenge \cond)$ as in \Cref{sec:threshold}.
We say that $\sigma$ induces $\val(\lambda)$ if $\val^\sigma(\lambda) = \val(\lambda)$. The next modification is based on the following observation: 
\begin{lemma}\label{lem:poltrack}
    Let $0 \le \ell \le u \le 1$.
    A policy $\sigma \in \pols[\mdp,\cond]$ that induces $\val(\ell)$ and $\val(u)$ also induces $\val(\lambda)$ for all $\lambda \in [\ell,u]$.
\end{lemma}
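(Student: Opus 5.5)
The argument rests on two facts: each $\val^\sigma$ is affine in $\lambda$, and $\val$ is a pointwise maximum of such functions. For a fixed $\sigma \in \pols[\mdp,\cond]$, the map $\lambda \mapsto \val^\sigma(\lambda) = \prms(\lozenge \goal \cap \lozenge \cond) - \lambda \cdot \prms(\lozenge \cond)$ is affine in $\lambda$. Hence $\val(\lambda) = \max_{\sigma \in \pols[\mdp,\cond]} \val^\sigma(\lambda)$ is a pointwise supremum of affine functions and is therefore convex on $[0,1]$. Note that the maximum in the definition of $\val$ is attained (the paper uses this when it writes $\max$), but the argument only needs the inequality $\val^{\sigma'}(\lambda) \le \val(\lambda)$ for every $\sigma'$, which holds by definition.

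Now fix $\lambda \in [\ell,u]$. If $\ell = u$ there is nothing to show, so assume $\ell < u$ and write $\lambda = t\ell + (1-t)u$ with $t = (u-\lambda)/(u-\ell) \in [0,1]$. Because $\val^\sigma$ is affine and $\sigma$ induces both $\val(\ell)$ and $\val(u)$,
\[
\val^\sigma(\lambda) = t\,\val^\sigma(\ell) + (1-t)\,\val^\sigma(u) = t\,\val(\ell) + (1-t)\,\val(u).
\]
For any $\sigma' \in \pols[\mdp,\cond]$, affinity of $\val^{\sigma'}$ together with $\val^{\sigma'}(\ell) \le \val(\ell)$ and $\val^{\sigma'}(u) \le \val(u)$ gives $\val^{\sigma'}(\lambda) = t\,\val^{\sigma'}(\ell) + (1-t)\,\val^{\sigma'}(u) \le t\,\val(\ell) + (1-t)\,\val(u)$. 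Taking the supremum over $\sigma'$ yields $\val(\lambda) \le \val^\sigma(\lambda)$. The reverse inequality $\val^\sigma(\lambda) \le \val(\lambda)$ holds trivially because $\sigma \in \pols[\mdp,\cond]$. Therefore $\val^\sigma(\lambda) = \val(\lambda)$.

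There is no substantial obstacle. The only points needing care are that the interpolation weight $t$ lies in $[0,1]$ and that the degenerate case $\ell = u$ is handled separately.
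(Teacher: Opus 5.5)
Your proof is correct and follows the paper's argument: compare an arbitrary $\sigma'$ with $\sigma$ at the endpoints $\ell$ and $u$, then use that $\val^{\sigma'}$ and $\val^{\sigma}$ are affine in $\lambda$ to carry the inequality over to all of $[\ell,u]$. Your explicit interpolation weight $t$ and the handling of $\ell = u$ only spell out what the paper's two-line proof leaves implicit; the convexity remark about $\val$ is not needed.
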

\begin{proof}
For any $\sigma' \in \pols[\mdp,\cond]$ we have
$\val^{\sigma'}(\ell) \le \val^\sigma(\ell)$
and 
$\val^{\sigma'}(u) \le \val^\sigma(u)$.
Since $\val^{\sigma'}(\lambda)$ and $\val^\sigma(\lambda)$ are linear in $\lambda$, this yields $\val^{\sigma'}(\lambda) \le \val^\sigma(\lambda)$ for all $\lambda \in [\ell,u]$.
\end{proof}
When updating the bounds $\ell$ and $u$ in \Cref{line:upl,line:upu}, our modification of \Cref{alg:bisection} keeps track of a policy $\sigma_\ell$ that induces $\val(\ell)$ and a policy $\sigma_u$ that induces $\val(u)$.
If both policies coincide, they induce the optimal conditional probability. In particular, \Cref{lem:poltrack} yields that $\sigma^* = \sigma_\ell = \sigma_u$ induces the optimal value $\val^\sigma(\lambda^*) = \val(\lambda^*)$ for $\lambda^* = \prmmax(\lozenge \goal \mid \lozenge \cond) \in [\ell,u]$.
With \Cref{thm:threshold}, we get $\val^{\sigma^*}(\lambda^*) = 0$, yielding  \[
\prmmax(\lozenge \goal \mid \lozenge \cond) ~=~ \lambda^* ~=~ \frac{\pr{\mdp}{\sigma^*}(\lozenge \goal \cap \lozenge \cond)}{\pr{\mdp}{\sigma^*}(\lozenge \cond)} ~=~ \pr{\mdp}{\sigma^*}(\lozenge \goal \mid \lozenge \cond)\,.\]
We finally evaluate the found policy $\sigma^*$---by computing the above fraction---to obtain the resulting conditional probability.
\begin{example}
We compute $\pr{\mdpexb}{\max}(\lozenge \set{s_4} \mid \lozenge \set{s_5, s_6})$ for MDP $\mdp$ from \Cref{fig:exmtilde} using policy tracking bisection. The bisection algorithm starts with the bounds $[0,1]$ and no upper or lower bound policies. The current candidate $\lambda$ becomes $\nicefrac{1}{2}$. As seen in \Cref{ex:step1b}, $\val(\nicefrac{1}{2}) \geq 0$, thus the lower bound becomes $\nicefrac{1}{2}$, and the lower bound policy is $\sigma_\ell(s_1) = \alpha, \sigma_\ell(s_2) = \beta, \sigma_\ell(s_3) = \alpha$. Next, the candidate becomes $\lambda=\nicefrac{3}{4}$ and $\val(\nicefrac{3}{4}) < 0$. The upper bound is updated to $\nicefrac{3}{4}$ and the upper bound policy turns out to be equal to the lower bound policy $\sigma_u=\sigma_\ell$. Thus, we can apply \Cref{lem:poltrack} with $\sigma^*=\sigma_\ell=\sigma_u$ and calculate  $\pr{\mdpexb}{\sigma^*}(\lozenge \set{s_4} \mid \lozenge \set{s_5, s_6}) = \frac{\nicefrac{1}{3}}{\nicefrac{1}{2}} = \nicefrac{2}{3}$.
\end{example}

\subsubsection{Advanced Bisection Bounds}
The following results allow us to establish tighter bounds on the conditional probability based on the values $\val(\lambda)$.

\begin{lemma}\label{lem:advanced}
Let $\lambda \in [0,1]$. If $\val(\lambda) \ge 0$ then
\[
\lambda + \nicefrac{\val(\lambda)}{\prmmax(\lozenge \cond)} ~\le~
\prmmax(\lozenge \goal \mid \lozenge \cond) ~\le~
\lambda + \nicefrac{\val(\lambda)}{\prmmin(\lozenge \cond)}\,.
\]
If $\val(\lambda) \le 0$ then
\[
\lambda + \nicefrac{\val(\lambda)}{\prmmin(\lozenge \cond)} ~\le~
\prmmax(\lozenge \goal \mid \lozenge \cond) ~\le~
\lambda + \nicefrac{\val(\lambda)}{\prmmax(\lozenge \cond)}\,.
\]
\end{lemma}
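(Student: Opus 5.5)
The plan is to rewrite each value $\val^\sigma(\lambda)$ in terms of the conditional probability of $\sigma$, and then compare with $\prmmax(\lozenge \cond)$ and $\prmmin(\lozenge \cond)$ through elementary monotonicity of division by a positive number. For $\sigma \in \pols[\mdp,\cond]$ write $c_\sigma = \prms(\lozenge \goal \mid \lozenge \cond)$ and $p_\sigma = \prms(\lozenge \cond) > 0$. By the definition of conditional probability,
\[
\val^\sigma(\lambda) ~=~ p_\sigma \cdot (c_\sigma - \lambda),
\qquad\text{i.e.}\qquad
c_\sigma ~=~ \lambda + \nicefrac{\val^\sigma(\lambda)}{p_\sigma}.
\]
Moreover, $\prmmin(\lozenge \cond) \le p_\sigma \le \prmmax(\lozenge \cond)$ for every $\sigma$, and $\prmmax(\lozenge \goal \mid \lozenge \cond) = \sup_\sigma c_\sigma$. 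Each of the four inequalities is one of two kinds. Either it is a statement about all $\sigma$, giving an upper bound on the supremum, or it is a statement about one optimal policy $\sigma^\star$ with $\val^{\sigma^\star}(\lambda) = \val(\lambda)$, giving a lower bound because $\prmmax(\lozenge \goal \mid \lozenge \cond) \ge c_{\sigma^\star}$.

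\emph{Case $\val(\lambda) \ge 0$.} For the lower bound, take $\sigma^\star$ attaining $\val(\lambda)$. Then
\[
c_{\sigma^\star} - \lambda ~=~ \nicefrac{\val(\lambda)}{p_{\sigma^\star}} ~\ge~ \nicefrac{\val(\lambda)}{\prmmax(\lozenge \cond)},
\]
since the numerator is nonnegative and $p_{\sigma^\star} \le \prmmax(\lozenge \cond)$. For the upper bound, take an arbitrary $\sigma \in \pols[\mdp,\cond]$. Then $p_\sigma(c_\sigma - \lambda) = \val^\sigma(\lambda) \le \val(\lambda)$, so
\[
c_\sigma - \lambda ~\le~ \nicefrac{\val(\lambda)}{p_\sigma} ~\le~ \nicefrac{\val(\lambda)}{\prmmin(\lozenge \cond)},
\]
using $\val(\lambda) \ge 0$ and $p_\sigma \ge \prmmin(\lozenge \cond)$. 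Taking the supremum over $\sigma$ gives the claim.

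\emph{Case $\val(\lambda) \le 0$.} The roles swap because the numerator is now nonpositive. For arbitrary $\sigma$ we get $c_\sigma - \lambda \le \nicefrac{\val(\lambda)}{p_\sigma} \le \nicefrac{\val(\lambda)}{\prmmax(\lozenge \cond)}$, since dividing a nonpositive number by a larger positive number increases it; this yields the upper bound. For the optimal $\sigma^\star$ we get $c_{\sigma^\star} - \lambda = \nicefrac{\val(\lambda)}{p_{\sigma^\star}} \ge \nicefrac{\val(\lambda)}{\prmmin(\lozenge \cond)}$, which yields the lower bound.

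The main obstacle is the side conditions rather than the algebra.

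First, we need existence of a maximizing $\sigma^\star \in \pols[\mdp,\cond]$. The paper writes $\val(\lambda)$ as a maximum, and the reductions of \Cref{lem:valueasrew}, \Cref{lem:absorbing} and Step~2b turn it into an expected total reward on $\tilde{\mdp}$, where optimal policies exist. Such a policy is converted back into a policy of $\pols[\mdp,\cond]$ as in the policy extraction, and it attains the value whenever $\val(\lambda) \ge 0$, which covers the first case. When $\val(\lambda) < 0$, the conversion from Step~2b only preserves the sign, so it may not attain $\val(\lambda)$. Here I would argue with $\delta$-optimal policies instead: they give $c_\sigma \ge \lambda + (\val(\lambda) - \delta)/\prmmin(\lozenge \cond)$, and letting $\delta \to 0$ gives the bound. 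The same $\delta$-argument handles the first case without relying on attainment.

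Second, $\prmmin(\lozenge \cond)$ may be $0$. In that case I read $\nicefrac{\val(\lambda)}{0}$ as $+\infty$ or $-\infty$ according to the sign of $\val(\lambda)$, and the corresponding bound becomes trivial. If $\val(\lambda) = 0$ and $\prmmin(\lozenge \cond) = 0$, the bound involving $\prmmin$ should be stated as vacuous, and I would add a remark to that effect.
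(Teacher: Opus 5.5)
Your core argument is the paper's proof. The lower bound comes from a policy attaining $\val(\lambda)$ via $c_\sigma=\lambda+\val(\lambda)/p_\sigma$, the upper bound from $\val^\sigma(\lambda)\le\val(\lambda)$, and both are finished with $\prmmin(\lozenge\cond)\le p_\sigma\le\prmmax(\lozenge\cond)$ and the sign of $\val(\lambda)$. The only difference is that the paper applies the upper-bound step to one policy attaining $\prmmax(\lozenge\goal\mid\lozenge\cond)$. You apply it to every $\sigma$ and take the supremum, which spares you assuming that this maximum is attained.

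One of your side remarks is wrong, though. When $\val(\lambda)=0$, neither the $\delta$-argument nor the conversion from $\tilde{\mdp}$ handles the first case without attainment. A $\delta$-optimal $\sigma$ only yields $c_\sigma-\lambda\ge -\delta/p_\sigma$, and $p_\sigma$ may shrink along with $\delta$.

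In fact, if $\val$ is read as a supremum, the first-case lower bound $\lambda\le\prmmax(\lozenge\goal\mid\lozenge\cond)$ is false there. Take $\goal=\emptyset$, and give the initial state $s$ two actions: $e$ leads to an absorbing $\cond$-state, and $c$ returns to $s$ or enters an absorbing non-$\cond$ sink, with probability $\nicefrac{1}{2}$ each. Playing $c$ for $n$ steps and then $e$ gives $\prms(\lozenge\cond)=2^{-n}$. So for $\lambda>0$ the supremum of $\val^\sigma(\lambda)=-\lambda\,\prms(\lozenge\cond)$ over $\pols[\mdp,\cond]$ is $0$ and is not attained. Yet $\prmmax(\lozenge\goal\mid\lozenge\cond)=0<\lambda$, and $\tilde{\mdp}$ has value $-\lambda$.

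So attainment is a genuine hypothesis, which the paper builds in by defining $\val(\lambda)$ as a maximum. It does hold automatically when $\val(\lambda)>0$ or $\prmmin(\lozenge\cond)>0$, because an unrestricted optimal policy then lies in $\pols[\mdp,\cond]$. That is the right justification, not the $\tilde{\mdp}$ conversion, which only preserves signs. Your $\delta$-argument for the case $\val(\lambda)\le 0$ is fine.
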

The proof is in \refarxiv{\cite{arxiv}}{\cref{proofs:valueproblem}}. When updating the bounds $\ell$ and $u$ in \Cref{line:upl,line:upu} of \Cref{alg:bisection}, sharper bounds based on the inequalities in \Cref{lem:advanced} may be chosen.
This modification requires pre-computing $\prmmin(\lozenge \cond)$ and $\prmmax(\lozenge \cond)$.

\section{Conditional Reachability in Colored MDPs}
\label{sec:colored-mdps}

Colored MDPs~\cite{DBLP:journals/jair/AndriushchenkoCMJK25,heck2025constrained} extend MDPs by grouping states. Intuitively, policies must take the same action across a group. This model captures various synthesis problems where additional constraints on policies are imposed.
Each policy $\sigma$ that satisfies these constraints induces an MC $\mdp^\sigma$. Colored MDPs thus compactly represent families of MCs, each induced by such a policy.

This section shows how the methods above can be effectively combined with an abstraction-refinement (AR) method~\cite{DBLP:journals/jair/AndriushchenkoCMJK25} to solve conditional reachability problem for families of MCs, represented by colored MDPs. In the experiments, we use this for probabilistic program verification and monitoring benchmarks.

For conciseness, we use a simplified definition of colored MDPs from~\cite{DBLP:conf/atva/MaasJ25}:
\begin{definition}[Colored MDPs]
A colored
MDP is a tuple $\mdp^C = (\mdp, C, c)$, where $\mdp$ is an MDP with states $S$, $C$ is a set of colors, and $c \colon S \rightarrow C$.
\end{definition}
\begin{definition}[Color-consistent policy]
A memoryless policy $\sigma$ for a colored MDP 
 is \emph{color consistent}, if for all states $s$, $s'$:  $c(s)=c(s')$ implies $\sigma(s) = \sigma(s')$.
\end{definition}
Policy synthesis for colored MDPs asks for a color-consistent policy such that the reachability probability exceeds a given threshold. This problem is NP-hard, but efficient heuristics based on abstraction-refinement, counterexample generalization, and SMT exist~\cite{DBLP:journals/jair/AndriushchenkoCMJK25,heck2025constrained}. We lift the problem to conditional reachability.

\begin{problemstatement}[Threshold Conditional Reachability for colored MDPs]\label{prob:cMDP}
\vspace{-1em}
Given a colored MDP $\mdp^C = (\mdp, C, c)$ with $\goal,\cond \subseteq S$ and $\lambda \in [0,1]$, decide whether there is a color-consistent policy $\sigma$ such that  $Pr_{\mdp}^{\sigma}(\lozenge \goal \mid \lozenge \cond) \ge \lambda$.
\end{problemstatement}

In the remainder of this section, we clarify the connection of colored MDPs to family of MCs and adapt the AR algorithm from~\cite{DBLP:journals/jair/AndriushchenkoCMJK25} to our~\cref{prob:cMDP}. %

\subsubsection*{Family of MCs.}
\label{sec:family-of-mcs}
Many synthesis problems consider MDPs under a restricted set of policies. The set of induced Markov chains is a \emph{family}:
\begin{definition}[Family of MCs]
    A family of MCs $\mathcal{F}$ given by colored MDP $\mdp^C$ is defined as $\mathcal{F} = \{\mdp^\sigma \mid \sigma \text{ is a colored consistent policy for } \mdp^C\}$.
\end{definition}
\begin{example}\label{ex:family-of-mcs}
    Consider the colored MDP $\mdp^C$ in Figure~\ref{fig:cmdp-examples} with underlying MDP from Figure~\ref{fig:exmtilde} and a coloring with $c(s_2)=c(s_3)$, so color-consistent policies must satisfy $\sigma(s_2)=\sigma(s_3)$. While the underlying MDP has eight policies, only four are color consistent, corresponding to the three MCs in Figure~\ref{fig:cmdp-examples}: ($M_1$) from ~$\sigma(s_1)=\alpha,~\sigma(s_2)=\sigma(s_3)=\beta$, ($M_2$) from $\sigma(s_1)=\sigma(s_2)=\sigma(s_2)=\alpha$, and ($M_3$) $\sigma(s_1)=\beta,~\sigma(s_2)=\sigma(s_3)=\alpha/\beta$ where $s_2$ and $s_3$ unreachable.
\end{example}

\begin{figure}[t]
    \centering
    \begin{tikzpicture}[mdp]
        \node[ps, cmdpblue, init=left] (s1) {$s_1$};
        \node[ps, cmdporange, below=1.2 of s1] (s2) {$s_2$};
        \node[ps, cmdporange, below right=0.48 and 0.3 of s1] (s3) {$s_3$};
        \node[ps, right=2.0 of s1, label={right:$\goal$}] (s4) {$s_4$};
        \node[ps, cmdpyellow, right=2.0 of s2, label={right:$\cond$}] (s5) {$s_5$};
        \node[ps, cmdpyellow, right=0.55 of s3, label={above:$\cond$}] (s6) {$s_6$};
        \node[anode] (a12) at ($(s1)!0.5!(s2)$) {};
        \node[anode] (a14) at ($(s1)!0.5!(s4)$) {};
        \node[anode] (a23) at ($(s2)!0.5!(s3)$) {};
        \node[anode] (a25) at ($(s2)!0.5!(s5)$) {};
        \node[anode] (a3) at ($(s3)!0.5!(s1)$) {};
        \node[anode] (a36) at ($(s3)!0.5!(s6)$) {};
        \node[anode] (a54) at ($(s5)!0.5!(s4)$) {};
        \path[font=\scriptsize]
        (s1) edge node[left]{$\alpha$} (a12)
        (a12) edge[->] (s2)
        (a12) edge[->] (s3)
        (s2) edge node[above, xshift=-1mm]{$\alpha$} (a23)
        (a23) edge[->] (s3)
        (s3) edge node[left]{$\alpha$} (a3)
        (a3) edge[->, bend left=40] (s3)
        (s1) edge node[above]{$\beta$} (a14)
        (a14) edge[->] (s4)
        (s2) edge node[below]{$\beta$} (a25)
        (a25) edge[->] (s5)
        (a25) edge[->, bend right=20] (s2)
        (s3) edge node[above]{$\beta$} (a36)
        (a36) edge[->, bend right=20] node[below left=-0.11 and -0.11, near start]{$\frac{2}{3}$} (s5)
        (a36) edge[->] node[above]{$\frac{1}{3}$} (s6)
        (s5) edge (a54)
        (a54) edge[->] node[right]{$\frac{2}{3}$} (s4)
        (a54) edge[->] node[below]{$\frac{1}{3}$} (s6)
        (s4) edge[->, out=30, in=60, loop] (s4)
        (s6) edge[->, out=30, in=60, loop] (s6)
        ;
    \end{tikzpicture}
    \begin{tikzpicture}[mdp]
        \node[] at (-0.35,-0.45) {$M_1$};
        \node[ps, init=left] (s1) {$s_1$};
        \node[ps, below=1.2 of s1] (s2) {$s_2$};
        \node[ps, below right=0.48 and 0.3 of s1] (s3) {$s_3$};
        \node[ps, right=2.0 of s1] (s4) {$s_4$};
        \node[ps, right=2.0 of s2] (s5) {$s_5$};
        \node[ps, right=0.55 of s3] (s6) {$s_6$};
        \path[font=\scriptsize]
        (s1) edge[->] (s2)
        (s1) edge[->] (s3)
        (s2) edge[->, out=30, in=60, loop] (s2)
        (s2) edge[->] (s5)
        (s3) edge[->] node[below=-.05]{$\frac{2}{3}$} (s5)
        (s3) edge[->] node[above]{$\frac{1}{3}$} (s6)
        (s4) edge[->, out=30, in=60, loop] (s4)
        (s5) edge[->] node[right=-.05]{$\frac{2}{3}$} (s4)
        (s5) edge[->] node[above=-.05]{$\frac{1}{3}$} (s6)
        (s6) edge[->, out=30, in=60, loop] (s6)
        ;
    \end{tikzpicture}
    \begin{tikzpicture}[mdp]
        \node[] at (-0.35,-0.45) {$M_2$};
        \node[ps, init=left] (s1) {$s_1$};
        \node[ps, below=1.2 of s1] (s2) {$s_2$};
        \node[ps, below right=0.48 and 0.3 of s1] (s3) {$s_3$};
        \path[font=\scriptsize]
        (s1) edge[->] (s2)
        (s1) edge[->] (s3)
        (s2) edge[->] (s3)
        (s3) edge[->, out=30, in=60, loop] (s3)
        ;
    \end{tikzpicture}
    \begin{tikzpicture}[mdp]
        \node[] at (-0.35,-0.45) {$M_3$};
        \node[ps, init=left] (s1) {$s_1$};
        \node[ps, below=1.2 of s1] (s4) {$s_4$};
        \path[font=\scriptsize]
        (s1) edge[->] (s4)
        (s4) edge[->, out=30, in=60, loop] (s4)
        ;
    \end{tikzpicture}
    \caption{(left) Colored MDP $\mdp^C$ with underlying MDP from Fig.\ref{fig:exmtilde}. The coloring restricts the policy space to policies where $\sigma(s_2) = \sigma(s_3)$ ($s_5$ and $s_6$ are irrelevant as only one action is available here). (right) three Markov chains in the family of MCs given by color consistent policies in $\mdp^C$.}
    \label{fig:cmdp-examples}
\end{figure}

While an arbitrary family of MCs can be encoded as a colored MDP, the efficiency of this encoding depends on the similarity of the MCs in the family~\cite{DBLP:journals/jair/AndriushchenkoCMJK25}. 
Synthesis problems are not given by families of MCs, but immediately by colored MDPs, which are concise if the transition relation for most states coincides across family members.
Previous work shows use cases where families of over $10^{100}$ MCs are efficiently represented via colored MDPs~\cite{DBLP:journals/jair/AndriushchenkoCMJK25}.

\subsubsection*{Abstraction Refinement for Colored MDPs.}
\label{sec:ar-cmpds}

Consider a colored MDP $\mdp^C = (\mdp, C, c)$ and a constraint  $\ge \lambda$. 
Since the consistent policies in $\mdp^C$ are a subset of all policies, $\mdp$ provides a sound abstraction of $\mdp^C$. Therefore the policy $\sigma_{\max}$ maximising the conditional probability $Pr_{\mdp}(\lozenge \goal \mid \lozenge \cond)$ in the MDP $\mdp$ provides the upper bound $ub$ on maximal conditional probability achieved by any consistent policy in $\mdp^C$.
If $ub < \lambda$ then all policies, including the consistent policies violate $\ge \lambda$ and thus there is no consistent policy satisfying the constraint. If $ub \ge \lambda$ and $\sigma_{\max}$ is consistent, we have the solution of \cref{prob:cMDP}. In all other cases, we cannot draw any conclusion and the abstraction needs to be refined.

The goal of the refinement is to remove the inconsistent policy $\sigma_{\max}$ from the policies in the abstraction. This is achieved by splitting the set of policies in $\mdp$ into two (disjoint) subsets such that $\sigma_{\max}$ is not included in either subset. Let $\sigma_{\max}$ be inconsistent in the states $s$ and $s'$, i.e. $c(s)=c(s')$ while $\sigma(s) = \alpha$ and  $\sigma(s')=\beta$ (where $\alpha$ and $\beta$ are two different actions). We can efficiently represent the subsets by two (colored) sub-MDPs $\mdp_1$ and $\mdp_2$. %
In $\mdp_1$, we disable $\alpha$ in states $s$, $s'$ while in $\mdp_2$, we disable $\beta$. Any consistent policy from $\mdp$ exists in either $\mdp_1$ and $\mdp_2$. Hence, the AR leads to an iterative algorithm that analyses a sequence of colored MDPs $\mdp_i^C = (\mdp_i, C, c)$. In every iteration, either a satisfying consistent policy is found or $\mdp_i^C$ is either discarded (if $ub < \lambda$), or refined. The finite number of policies implies termination of the algorithm. 

\begin{example}\label{ex:cmdp-ar}
    Take the colored MDP $\mdp^C$ from Figure~\ref{fig:cmdp-examples} and a threshold $\geq 0.6 = \lambda$. We obtain $\prmmax(\lozenge G \mid \lozenge E) = \frac{2}{3}$ and policy $\sigma_{\max}$ with $\sigma_{\max}(s_1)=\sigma_{\max}(s_3)=\alpha$ and $\sigma_{\max}(s_2)=\beta$. Since the value exceeds $\lambda$, we check policy consistency. The policy is inconsistent: it selects different actions in $s_2$ and $s_3$. We refine the abstraction into two disjoint colored sub-MDPs: $\mdp_{1}^{C}$, $\alpha$ is enforced in $s_2$ and $s_3$, and $\mdp_{2}^{C}$, where $\alpha$ is not allowed in $s_2$, $s_3$. In $\mdp_{1}^{C}$ no policy reaches evidence states, so no policy can satisfy the threshold. For $\mdp_{2}^{C}$, we obtain $\pr{\mdp_{2}}{\max}(\lozenge G \mid \lozenge E)=\frac{5}{9} < \lambda$. Hence $\mdp_{2}^{C}$ is also discarded. Thus, no consistent policy satisfies $\lambda \geq 0.6$ in $\mdp^C$.
\end{example}

\subsubsection*{Refinement for Conditional Reachability.}
\label{sec:cmpd-advanced}
We integrate the approaches in \cref{sec:threshold,sec:valueproblem} with abstraction refinement~\cite{DBLP:journals/jair/AndriushchenkoCMJK25} and outline differences to the standard reachability case. First, for \cref{prob:cMDP}, the use of \cref{sec:threshold} suffices.
Second, the policy $\sigma_{\max}$  is typically inconsistent in several states, which raises a design choice for the refinement strategy. %
In~\cite{DBLP:journals/jair/AndriushchenkoCMJK25}, the heuristic is based on prioritizing states based on the expected number of visits multiplied by the probability of reaching the goal from that state, both under $\sigma_{\max}$. Conditional reachability queries do not provide these probabilities for every state~\cite{DBLP:conf/tacas/AndresR08}. Based on our preliminary experiments, we use a simple backward refinement that always refines the inconsistent state that has the maximal distance from $\sinit$.

\section{Experiments}
\label{sec:experiments}
\newcommand{\restart}{\textsf{restart}}
\newcommand{\treat}{\textsf{treat}}

We compare the \restart{} method~\cite{DBLP:conf/tacas/BaierKKM14} with our novel \textbf{t}otal \textbf{re}w\textbf{a}rd reduc\textbf{t}ion (\treat) based methods in~\cref{sec:threshold,sec:valueproblem}.
We answer the following research questions:
\begin{description}
    \item[(RQ1)] What is the fastest way to solve \Cref{prob:threshold} and \Cref{prob:optimization} (on MDPs)?
    \item[(RQ2)] Can the novel algorithms for the computation of conditional properties speed up solving of \cref{prob:cMDP} (on colored MDPs)?
    \item[(RQ3)] How does explicit support for conditional reachability in \storm{}~\cite{DBLP:journals/sttt/HenselJKQV22} affect runtime monitoring experiments as in \cite{DBLP:conf/cav/JungesTS20}?
\end{description}

\paragraph{Summary of the results.}
For (RQ1), the use of our new algorithms speeds up the computation on most benchmarks, for both floating-point and exact computations. The floating point results are accurate, whereas they are not always for the old restart method, and the performance has improved by orders of magnitude on some (acyclic) benchmarks.
For (RQ2), the results demonstrate not only the feasibility of adapting abstraction-refinement towards conditional probabilities, but also the importance of across-the-board performance improvements on small/easy MDPs in this setting for both floating-point and exact computations.
Beyond what is reported in (RQ1), the answers to (RQ3) highlight the impact of preprocessing, where already the old restart method with the novel preprocessing outperforms the previous `outside' construction of the restart MDP and its analysis by up to orders of magnitude.

\paragraph{Setup.} 
We run single-threaded experiments on an AMD Threadripper 5965wx, with a 10 minute timeout for RQ1 and RQ3 and a 15 minute timeout for RQ2. Memory consumption remains below 32 GB RAM. In the experiments, we report the model checking times, which exclude loading the model (from an explicit-state input) but include the necessary preprocessing. The supplementary material \cite{zenodo} contains all data points and the source code used to perform the experiments.

\subsection*{RQ1\quad Computing Conditional Properties on MDPs}
\label{sec:experiments:mdps}

\paragraph{Implementation.}
All methods (including \restart) are implemented within the model checker \storm{}~\cite{DBLP:journals/sttt/HenselJKQV22} and share the same process of model building and preprocessing, including the elimination of the initial component (\Cref{def:initialcomp})
All methods support exact and floating-point arithmetic and can be instantiated with any MDP solver. Here, we focus on the standard configurations in \storm\footnote{with exact guarantees using exact arithmetic and value iteration without hard guarantees for floating point arithmetic: like~\cite{DBLP:conf/tacas/HartmannsJQW23}, we observed that benchmarks where VI is wrong yield timeouts for optimistic VI.}. 

\paragraph{Benchmarks.} We use 318 benchmarks from four sources. 
(1)~We use (all) MDP benchmarks from \cite{DBLP:conf/tacas/BaierKKM14,DBLP:conf/sefm/MarckerB0K17}, which are adapted from standard benchmarks (\emph{consensus}, \emph{wlan}). 
(2)~From runtime monitoring for MDPs~\cite{DBLP:conf/cav/JungesTS20}, we collect challenging benchmarks\footnote{Which are not solved by all approaches within a $0.1$ seconds} using the same approach that was used to create benchmarks in~\cite{DBLP:conf/tacas/HartmannsJQW23} (\emph{refuel}, \emph{patrol}, \emph{evade}). 
(3)~We use MDPs from (uncertain/parametric) Bayesian networks~\cite{DBLP:journals/jair/SalmaniK23}, which correspond to the MDP semantics for the (interval) MCs~\cite{DBLP:journals/sttt/BadingsSSJ23} underlying these Bayesian networks (BNs) (\emph{alarm}, \emph{child}, and others).
(4)~Similarly, we add (interval-based) uncertainty to (all) MC benchmarks from~\cite{DBLP:conf/tacas/BaierKKM14,DBLP:conf/sefm/MarckerB0K17}, which are adapted from standard benchmarks (\emph{brp}, \emph{egl}, \emph{crowds}) and translate them to MDPs as above. Where applicable, we evaluate models on multiple properties. The runtime monitoring, BN and the \emph{brp} benchmark are acyclic.

\begin{figure}[t]
    \centering
    \begin{subfigure}[b]{0.48\textwidth}
    \resizebox{\linewidth}{!}{
\input{plots/mdps/scatter_bounded_bisection_exact_vs_restart_exact.pgf}
        }
        \label{fig:threshold:exact}
    \end{subfigure}
    \begin{subfigure}[b]{0.48\textwidth}
    \resizebox{\linewidth}{!}{
\input{plots/mdps/scatter_bounded_bisection_float_vs_restart_float.pgf}
}
        \label{fig:threshold:float}
    \end{subfigure}
    \vspace{-3mm}
    \caption{Runtime for solving \cref{prob:threshold}: \restart{} vs \treat{}. Exact arithmetic left, numerical computations right. Points refer to benchmark instances (from: \textcolor{red!60!black}{BNs}, \textcolor{blue!60!black}{MDPs}, \textcolor{green!60!black}{IMCs}, \textcolor{yellow!60!black}{RunMon}). Points above the diagonal correspond to a speedup of \treat{}, points above dashed lines to a 10x or 100x speedup.  }
    \label{fig:threshold}
\end{figure}
\subsubsection{Solving the Threshold Problem.}
We compare  \restart{} with \treat. In Fig.~\ref{fig:threshold} we provide a log-log scale scatter plot, where points correspond to benchmarks and provide a runtime comparison between two methods.

\paragraph{The exact case.}
We observe a considerable speed-up (${>}2\times$) on many models and a multi-order magnitude speed-up on some models. For a few benchmarks, the impact of the method choice is marginal as the initial component elimination alone yields tiny MDPs. The highest speed-up occurs on acyclic models with large initial components, where \treat{} uses dynamic programming and \restart{} requires repeated exact LU decompositions for the policy iteration. %

\paragraph{The numerical case.}
Numerical approaches relying on \restart{} do not (always) converge, as was observed before~\cite{DBLP:conf/sefm/MarckerB0K17,DBLP:conf/tacas/HartmannsJQW23}. Even when thresholding, this does lead to some errors: These errors are not due to unfortunate selections of thresholds, but due to values being completely off. For benchmarks where the numerical computations converge, we see significantly lower runtimes and a similar picture as in the exact case, although the effects are less pronounced.

\newcommand{\bisection}{\textsf{bis-std}}
\newcommand{\advbisection}{\textsf{bis-adv}}
\newcommand{\ptbisection}{\textsf{pt-bis-std}}
\newcommand{\ptadvbisection}{\textsf{pt-bis-adv}}

\begin{figure}[t]
    \centering
    \begin{subfigure}[b]{0.48\textwidth}
    \resizebox{\linewidth}{!}{
        \input{plots/mdps/scatter_quantitative_bisection-pt_exact_vs_restart_exact.pgf}
        }
        \label{fig:optimization:exact}
    \end{subfigure}
    \begin{subfigure}[b]{0.48\textwidth}
    \resizebox{\linewidth}{!}{
         \input{plots/mdps/scatter_quantitative_bisection-pt_float_vs_restart_float.pgf}
         }
        \label{fig:optimization:float}
    \end{subfigure}
    \vspace{-3mm}
    \caption{Runtime for solving \cref{prob:optimization}: \restart{} vs \ptbisection{}, analogous to \cref{fig:threshold}.}
    \label{fig:optimizationrestart}
\end{figure}

\subsubsection{Solving the Optimization Problem}
We compare  \restart{} with bisection-variants on top of \treat: We use standard (\bisection) and advanced bounds (\advbisection) and run it without and with policy tracking (\ptbisection, \ptadvbisection). 
In Fig.~\ref{fig:optimizationrestart}, we compare \ptbisection{} -- the empirically strongest configuration -- with \restart{}.  With exact arithmetic, results coincide and are used as a ground truth, whereas the numerical methods sometimes yield significant errors\footnote{We mark results that have a relative error $>10^{-3}$ as incorrect, unless ground truth and reported result are $<10^{-8}$. We use these liberal interpretation numbers to highlight the significance of the errors. }. The supplementary material contains additional scatter plots. 

\paragraph{The exact case.}
The configuration \ptbisection{} overall outperforms \restart{}, but not on every model: The \restart{} suffers significantly on acyclic models, but computing an exact and potentially exponentially large rational number with bisection can also be challenging. In fact, while \restart{} times out 60$\times$, \bisection{} and \advbisection{} 106$\times$ and 97$\times$, respectively. Policy tracking is effective, see also \cref{fig:optimization:eps}(left) and reduces time-outs to 9$\times$ for \ptbisection{} and 57$\times$ for \ptadvbisection.

\paragraph{The numerical case.}
The restart method does not reliably provide accurate results (as before) and provides wrong results on 171  benchmarks, whereas \ptbisection{} is always correct and solves many of them instantly. The overhead of any bisection method is limited as the number of iterations required to converge to a target precision of $10^{-6}$ is upper bounded to twenty for the most naive bisection strategy possible. Indeed, our bisection variations all yield lower number of iterations. Standard \bisection\ times out 13$\times$ compared to 6-7$\times$ for any improved variant.

\begin{figure}[t]
    \centering
    \begin{subfigure}[b]{0.48\textwidth}
        \resizebox{\linewidth}{!}{
        \input{plots/mdps/scatter_quantitative_bisection-pt_exact_vs_bisection_exact.pgf}
         \vspace{-3mm}
         }
    \end{subfigure}
    \begin{subfigure}[b]{0.48\textwidth}
        \resizebox{\linewidth}{!}{
         \input{plots/mdps/scatter_quantitative_bisection-pt_eps-exact_vs_restart_exact.pgf}
         }
          \vspace{-3mm}
    \end{subfigure}
    \caption{Runtime for solving \cref{prob:optimization}. Left: Exact arithmetic \bisection{} vs \ptbisection{}. Right: $\epsilon$-optimal \ptbisection{} vs. exact \restart{}. Legend as in \cref{fig:threshold}.}
     \label{fig:optimization:eps}
\end{figure}

\paragraph{Exact $\varepsilon$-guarantees.}
Due to the slow convergence of VI, solving reachability with exact arithmetic up to an $\varepsilon$-error is not effective. However, when using bisection-based methods, we can use \treat{} for computing exact results, but terminate once the exact result is approximated up to $10^{-6}$. As in the numerical case, this efficiently bounds the number of iterations, but provides strong guarantees. The resulting \cref{fig:optimization:eps}(right) clearly shows the advantages of the novel approach.

\subsection*{RQ2\quad Computing Conditional Properties on Colored MDPs}
\label{sec:experiments:colored-mdps}

\paragraph{Implementation and setup.}
We extended \tool{PAYNT}~\cite{DBLP:journals/jair/AndriushchenkoCMJK25}, a tool for solving colored MDPs, to support conditional reachability. \tool{PAYNT} uses \storm{} internally for model checking DTMCs and MDPs. Therefore, we can evaluate the  advantage of \treat{} versus \restart{} in the context of \cref{prob:cMDP}. We also evaluate the effect of only thresholding against computing the optimal value: In particular, we compare with \ptbisection\ that typically provides the best performance (see RQ1; results for the other bisection methods are presented in \refarxiv{\cite{arxiv}}{\cref{app:cmdp-extended-results}}). We report the overall runtime for \cref{prob:cMDP} and the abstraction-refinement iterations per second, for numerical and exact computations.

\begin{table}[t]
    \centering
        \caption{Performance comparison for color MDP problems. We report runtime in seconds and number of iterations per second. The left side reports results for numerical computation and the right side reports results for exact computation. The boldface values denote the fastest algorithm for each computation type.}
    \scalebox{0.9}{

\input{tables/colored_mdps_combined}

    }

    \label{tab:colored-mdps}
\end{table}

\paragraph{Benchmarks}
We use benchmarks from three different domains.
(1)~probabilistic program verification: the \emph{Caesar cipher (ceas)} with different input length and alphabet size and two variants of \emph{ladder network (ladder)}~\cite{DBLP:journals/pacmpl/HoltzenBM20} with varying possible inputs.
(2) Families of MCs from \tool{PAYNT}, but with conditional probability specifications (\emph{dpm~\cite{NPK+02}, virus}). 
(3)~Colored MDPs exported from monitor verification~\cite{DBLP:conf/atva/MaasJ25} where the problem is to verify whether a proposed monitor has no missed alarms. Table~\ref{tab:colored-mdps} reports the number of color-consistent policies $|\mdp^C|$ in the colored MDP $\mdp^C$ and the number of underlying states $\sizeof{\mdp}$.

\paragraph{Role of the refinement strategy.} Different polices computed by the individual algorithms can significantly affect the abstraction-refinement strategy (in terms of the number of iterations and the structure of the hardness of the consecutive queries) and thus the overall performance. %

\paragraph{Using \treat{} vs \restart.} For the numerical computation, the \restart \ method is unable to solve a majority of the benchmarks and \treat\ based methods are clearly superior. Except for the $airport$ benchmarks, this is due to the timing for individual queries. %
For $airport$, we observe the impact of the refinement strategy: despite a comparable number of MDP queries, \restart\ times out. For the exact computation, the advantage of \treat\ based methods is less prominent, but except for the smallest benchmarks 
\restart\ still lags considerably behind.

\paragraph{Using \treat\ vs. \ptbisection.} In most cases, \treat\ outperforms the bisection-based methods, since it reduces the number of value iterations needed to solve the given MDP query. However, the advantage of \treat\ can sometimes be mitigated by the impact of different policies on refinement.

\begin{table}[t]
    \centering
    
    \caption{Runtime monitoring examples: 3 methods, each evaluated on ten traces of 250 steps. We report the average and maximal time to evaluate a trace.}
    \input{tables/monitoring_table}
    \label{tab:monitoring}
\end{table}
\subsection*{RQ3 \quad Runtime Monitoring Revisited}
In this experiment, we evaluate the impact of support for conditional probabilities in MDPs into \storm{} in general and the use of \treat{} in particular, when considering an application in runtime monitoring. In particular, we replicate experiments from~\cite{DBLP:conf/cav/JungesTS20}. Somewhat simplified, that paper presents a method that takes a trace (of colors) $\tau$ to compare a series of conditional probabilities: 
\[ \prmmax(\lozenge \goal \mid \tau_{|1}) \leq \lambda, \dots, \prmmax(\lozenge G \mid \tau_{|n}) \leq \lambda,   \]
where $\Pr(\lozenge \goal \mid \tau_{|i})$ denotes the probability to reach states in $G$, given that the the states visited in the first $i$ steps match the coloring given in $\tau$. By unrolling, these queries can be reformulated into $\prmmax(\lozenge \goal \mid \lozenge \cond)$~\cite{DBLP:conf/tacas/BaierKKM14}.

\paragraph{Setup.}
The \textsf{previous} approach put forward in~\cite{DBLP:conf/cav/JungesTS20} unrolls the MDP immediately into the restart MDP and computes the conditional reachability probabilities using off-the-shelf MDP reachability in \storm{}. We adapted the code to only unroll and use the conditional reachability model checking. 
In contrast to RQ1, we did not filter for challenging individual instances, but run ten traces of 250 steps on models from the \cite{DBLP:conf/cav/JungesTS20}-repository.

\paragraph{Result.}
In \cref{tab:monitoring}, we observe not only the advantage of \treat{} vs \restart, but also how \storm{} provides efficient preprocessing that significantly improves the performance. The improvements in average time are particularly relevant for learning runtime monitors from data~\cite{DBLP:conf/atva/MaasJ25}, whereas a low maximal response time is particularly relevant when deploying a monitor~\cite{DBLP:series/lncs/BartocciDDFMNS18}.

\section{Conclusion}
Our novel approach to compute conditional reachability probabilities in MDPs clearly outperforms the previous state of the art. We show the benefits for various benchmarks from the literature, for Markov chain families, and in runtime monitoring applications. We would like to lift these approaches to symbolic state spaces and programs, as well as to interval MDPs or stochastic games. The extension towards Markov chain families raises additional questions how to generally handle specifications that require history-dependent policies. We furthermore hope to integrate support for conditional reachability  into~\cite{heck2025constrained}, which allows reasoning about families while directly supporting assumptions, e.g., on the reachability of $\cond$.

\paragraph{Acknowledgements.}
\inlinegraphics{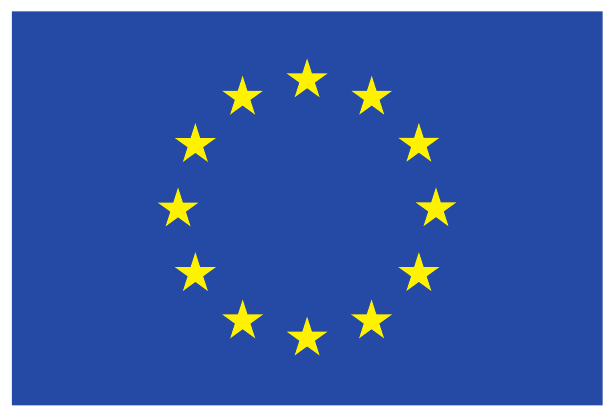} This work has been executed under the project VASSAL: ``Verification and Analysis for Safety and Security of Applications in Life'' funded by the European Union under Horizon Europe WIDERA Coordination and Support Action/Grant Agreement No. 10116002.
This work was partly supported by the NWO grant FuRoRe (OCENW.M.22.282) and by the KI-Starter Project Verifying AI Systems under Partial Observability of the Ministry of Culture and Science of the German State of North Rhine-Westphalia.

\paragraph{Competing Interests.}
The authors have no competing interests to declare that are relevant to the content of this article.

\paragraph{Data-Availability Statement.}
An artifact with source code and experimental data is available at \hyperlink{https://doi.org/10.5281/zenodo.19739061}{https://doi.org/10.5281/zenodo.19739061}~\cite{zenodo}. It contains the modifications done to \storm{}~\cite{DBLP:journals/sttt/HenselJKQV22}, \tool{PAYNT}~\cite{DBLP:journals/jair/AndriushchenkoCMJK25}, and \tool{premise}~\cite{DBLP:conf/cav/JungesTS20}. All algorithms implemented are part of \storm{} version 1.13.0 available at \hyperlink{https://stormchecker.org}{https://stormchecker.org}.

\clearpage
\bibliographystyle{splncs04}
\bibliography{literature}
\clearpage

\onlyarxiv{
\appendix

\section{Proofs}
\label{app:proofs}

\subsection{Proofs for \cref{sec:threshold}}
\label{proofs:threshold}
\subsubsection{Proof for \cref{thm:threshold}}

For $\sigma \in \pols[\mdp,\cond]$ we get
\begin{align*}
\prms(\lozenge \goal \mid \lozenge \cond) ~\sim~ \lambda
&\quad\textrm{iff}\quad
\frac{\prms(\lozenge \goal \cap \lozenge \cond)}{\prms(\lozenge \cond)} ~\sim~\lambda\\
&\quad\textrm{iff}\quad
\prms(\lozenge \goal \cap \lozenge \cond) - \lambda \cdot \prms(\lozenge \cond) ~\sim~0.
\end{align*}
We show the claim for ${\sim} = {\ge}$. Other cases are analogous.
\begin{align*}
\prmmax(\lozenge \goal \mid \lozenge \cond) ~\ge~ \lambda
&\quad\textrm{iff}\quad
\exists \sigma \in \pols[\mdp,\cond]\,\colon~\prms(\lozenge \goal \mid \lozenge \cond) ~\ge~ \lambda\\
&\quad\textrm{iff}\quad
\exists \sigma \in \pols[\mdp,\cond]\,\colon~
\prms(\lozenge \goal \cap \lozenge \cond) - \lambda \cdot \prms(\lozenge \cond) ~\ge~0\\
&\quad\textrm{iff}\quad
\max_{\sigma \in \pols[\mdp,\cond]} \Big(\prms(\lozenge \goal \cap \lozenge \cond) - \lambda \cdot \prms(\lozenge \cond)\Big) ~\ge~0.
\end{align*}\qed

\subsection{Proofs for \cref{sec:valueproblem}}
\label{proofs:valueproblem}
\subsubsection{Proof for \cref{lem:advanced}}
Let $\sigma^\lambda \in \pols[\mdp,\cond]$ be a policy that induces $\val(\lambda)$. We have:
\begin{align*}
\lambda + \frac{\val(\lambda)}{\pr{\mdp}{\sigma^\lambda}(\lozenge \cond)}
~&=~\lambda + \frac{\pr{\mdp}{\sigma^\lambda}(\lozenge \goal \cap \lozenge \cond) - \lambda \cdot \pr{\mdp}{\sigma^\lambda}(\lozenge \cond)}{\pr{\mdp}{\sigma^\lambda}(\lozenge \cond)}
~\le~ \prmmax(\lozenge \goal \mid \lozenge \cond).
\end{align*}
Furthermore, let $\sigma^* \in \pols[\mdp,\cond]$ be a policy that induces the maximal conditional reachability probability. We have:
\begin{align*}
    \prmmax(\lozenge \goal \mid \lozenge \cond)
    ~&=~ \lambda + \frac{\pr{\mdp}{\sigma^*}(\lozenge \goal \cap \lozenge \cond) - \lambda \cdot \pr{\mdp}{\sigma^*}(\lozenge \cond)}{\pr{\mdp}{\sigma^*}(\lozenge \cond)}
    ~\le~\lambda + \frac{\val(\lambda)}{\pr{\mdp}{\sigma^*}(\lozenge \cond)}.
\end{align*}
Using these inequalities, the lemma follow immediately as
\[
\prmmin(\lozenge \cond) ~\le~ \pr{\mdp}{\sigma^\lambda}(\lozenge \cond)\,,~\pr{\mdp}{\sigma^*}(\lozenge \cond) ~\le~ \prmmax(\lozenge \cond).
\]

\section{Experimental Data}
\label{appendix:experiments}

\subsection{Additional Scatter Plots for \cref{prob:optimization}}
\label{app:scatter-mdp}

We provide scatter plots comparing \ptbisection{} against other bisection variants across all arithmetic modes; the legend is as in \cref{fig:threshold}.

\begin{figure}[H]
    \centering
    \begin{subfigure}[b]{0.32\textwidth}
        \resizebox{\linewidth}{!}{\input{plots/mdps/scatter_quantitative_bisection-pt_exact_vs_bisection_exact.pgf}}
    \end{subfigure}
    \begin{subfigure}[b]{0.32\textwidth}
        \resizebox{\linewidth}{!}{\input{plots/mdps/scatter_quantitative_bisection-pt_float_vs_bisection_float.pgf}}
    \end{subfigure}
    \begin{subfigure}[b]{0.32\textwidth}
        \resizebox{\linewidth}{!}{\input{plots/mdps/scatter_quantitative_bisection-pt_eps-exact_vs_bisection_eps-exact.pgf}}
    \end{subfigure}
    \caption{\ptbisection{} vs \bisection{}: exact (left), numerical (middle), $\varepsilon$-exact (right).}
    \label{fig:app:bisection}
\end{figure}

\begin{figure}[H]
    \centering
    \begin{subfigure}[b]{0.32\textwidth}
        \resizebox{\linewidth}{!}{\input{plots/mdps/scatter_quantitative_bisection-pt_exact_vs_bisection-advanced_exact.pgf}}
    \end{subfigure}
    \begin{subfigure}[b]{0.32\textwidth}
        \resizebox{\linewidth}{!}{\input{plots/mdps/scatter_quantitative_bisection-pt_float_vs_bisection-advanced_float.pgf}}
    \end{subfigure}
    \begin{subfigure}[b]{0.32\textwidth}
        \resizebox{\linewidth}{!}{\input{plots/mdps/scatter_quantitative_bisection-pt_eps-exact_vs_bisection-advanced_eps-exact.pgf}}
    \end{subfigure}
    \caption{\ptbisection{} vs \advbisection{}: exact (left), numerical (middle), $\varepsilon$-exact (right).}
    \label{fig:app:bisection-adv}
\end{figure}

\begin{figure}[H]
    \centering
    \begin{subfigure}[b]{0.32\textwidth}
        \resizebox{\linewidth}{!}{\input{plots/mdps/scatter_quantitative_bisection-pt_exact_vs_bisection-advanced-pt_exact.pgf}}
    \end{subfigure}
    \begin{subfigure}[b]{0.32\textwidth}
        \resizebox{\linewidth}{!}{\input{plots/mdps/scatter_quantitative_bisection-pt_float_vs_bisection-advanced-pt_float.pgf}}
    \end{subfigure}
    \begin{subfigure}[b]{0.32\textwidth}
        \resizebox{\linewidth}{!}{\input{plots/mdps/scatter_quantitative_bisection-pt_eps-exact_vs_bisection-advanced-pt_eps-exact.pgf}}
    \end{subfigure}
    \caption{\ptbisection{} vs \ptadvbisection{}: exact (left), numerical (middle), $\varepsilon$-exact (right).}
    \label{fig:app:bisection-adv-pt}
\end{figure}

\begin{figure}[H]
    \centering
    \begin{subfigure}[b]{0.32\textwidth}
        \resizebox{\linewidth}{!}{\input{plots/mdps/scatter_quantitative_bisection-pt_exact_vs_restart_exact.pgf}}
    \end{subfigure}
    \begin{subfigure}[b]{0.32\textwidth}
        \resizebox{\linewidth}{!}{\input{plots/mdps/scatter_quantitative_bisection-pt_float_vs_restart_float.pgf}}
    \end{subfigure}
    \begin{subfigure}[b]{0.32\textwidth}
        \resizebox{\linewidth}{!}{\input{plots/mdps/scatter_quantitative_bisection-pt_eps-exact_vs_restart_exact.pgf}}
    \end{subfigure}
    \caption{\ptbisection{} vs \restart{}: exact (left), numerical (middle), $\varepsilon$-exact (right).}
    \label{fig:app:restart}
\end{figure}

\subsection{Runtimes for \cref{prob:cMDP}}
\label{app:cmdp-extended-results}

In this section, we provide two tables comparing the impact of our novel algorithms on the \cref{prob:cMDP} benchmarks.

\begin{table}[H]
    \centering
    \caption{Results for all \treat{} based methods for the numerical case of \cref{prob:cMDP}.}
    \scalebox{0.8}{
    \input{tables/colored_mdps_extended_table}
    }
    \label{tab:appendix-cmpds-extended}
\end{table}

\begin{table}[H]
    \centering
    \caption{Results for all \treat{} based methods for the exact case of \cref{prob:cMDP}.}
    \scalebox{0.8}{
    \input{tables/colored_mdps_exact_extended_table}
    }
    \label{tab:appendix-cmpds-exact-extended}
\end{table}

}

\end{document}

%% file: tables/colored_mdps_combined.tex
\begin{tabular}{l@{\hskip 4pt}r@{\hskip 4pt}r@{\hskip 8pt}r@{\hskip 4pt}l@{\hskip 8pt}r@{\hskip 4pt}l@{\hskip 8pt}r@{\hskip 4pt}l@{\hskip 8pt}r@{\hskip 4pt}l@{\hskip 8pt}r@{\hskip 4pt}l@{\hskip 8pt}r@{\hskip 4pt}l@{\hskip 8pt}}
\toprule
 &  &  & \multicolumn{6}{c@{\hskip 8pt}}{numerical} & \multicolumn{6}{c@{\hskip 8pt}}{exact} \\
\cmidrule(lr{1.5em}){4-9}\cmidrule(lr{1.5em}){10-15}
\multirow{2}{*}{Model} & \multirow{2}{*}{$|\mdp^C|$} & \multirow{2}{*}{$|S|$} & \multicolumn{2}{c@{\hskip 8pt}}{\treat} & \multicolumn{2}{c@{\hskip 8pt}}{\ptbisection} & \multicolumn{2}{c@{\hskip 8pt}}{\restart{}} & \multicolumn{2}{c@{\hskip 8pt}}{\treat} & \multicolumn{2}{c@{\hskip 8pt}}{\ptbisection} & \multicolumn{2}{c@{\hskip 8pt}}{\restart{}} \\
\cmidrule(lr{1.5em}){4-5}\cmidrule(lr{1.5em}){6-7}\cmidrule(lr{1.5em}){8-9}\cmidrule(lr{1.5em}){10-11}\cmidrule(lr{1.5em}){12-13}\cmidrule(lr{1em}){14-15}
 & & & time & it/s & time & it/s & time & it/s & time & it/s & time & it/s & time & it/s \\
\midrule
ceas-4-10 & 1e6 & 265 & \textbf{<1s} & 1978 & \textbf{<1s} & 1978 & 120s & 1 & \textbf{<1s} & 848 & \textbf{<1s} & 712 & \textbf{<1s} & 574 \\
ceas-10-8 & 1e8 & 451 & \textbf{29s} & 1500 & 30s & 1459 & TO & <1 & \textbf{66s} & 654 & 71s & 610 & 86s & 507 \\
ladder & 8.6e11 & 251 & TO & 1204 & TO & 1157 & TO & 917 & TO & 411 & TO & 356 & TO & 196 \\
ladder-in & 8.6e12 & 503 & \textbf{305s} & 517 & 315s & 532 & TO & <1 & \textbf{742s} & 225 & TO & 163 & TO & 130 \\
dpm & 1620.0 & 1976 & \textbf{4s} & 75 & 46s & 6 & TO & <1 & \textbf{9s} & 29 & TO & <1 & 14s & 19 \\
dpm-q & 1.6e4 & 1733 & \textbf{5s} & 58 & 49s & 6 & TO & <1 & \textbf{10s} & 28 & TO & <1 & 15s & 19 \\
virus & 1.2e4 & 118 & \textbf{2s} & 1312 & TO & <1 & 5s & 659 & 12s & 344 & 205s & 19 & \textbf{10s} & 414 \\
airport & 1.3e12 & 51 & \textbf{58s} & 4726 & 94s & 4754 & TO & 3329 & 204s & 1339 & \textbf{194s} & 2299 & 631s & 2142 \\
airport-b & 1.8e14 & 115 & \textbf{21s} & 1583 & 29s & 1610 & TO & 565 & \textbf{39s} & 852 & 60s & 786 & 361s & 672 \\
\bottomrule
\end{tabular}

%% file: tables/monitoring_table.tex
\begin{tabular}{lrrrrrr}
\toprule
\multirow{2}{*}{Model} & \multicolumn{2}{c}{treat} & \multicolumn{2}{c}{restart} & \multicolumn{2}{c}{previous} \\
    \cmidrule(lr){2-3} \cmidrule(lr){4-5} \cmidrule(lr){6-7}
 & $t_\text{avg}$ & $t_\text{max}$ & $t_\text{avg}$ & $t_\text{max}$ & $t_\text{avg}$ & $t_\text{max}$ \\
\midrule
airportB-3-300-30 & \textbf{0.42} & \textbf{0.54} & 6.71 & 12.2 & 21.7 & 23.9 \\
evadeI-10 & \textbf{0.10} & \textbf{0.12} & 0.35 & 0.44 & 1.37 & 2.67 \\
evadeV-9-3 & \textbf{0.86} & \textbf{1.71} & 6.62 & 24.8 & 17.6 & 54.1 \\
refuelA-50-80 & \textbf{0.29} & \textbf{0.38} & 1.00 & 1.39 & 3.38 & 6.13 \\
\bottomrule
\end{tabular}

%% file: tables/colored_mdps_extended_table.tex
\begin{tabular}{lrr@{\hskip 12pt}r@{\hskip 4pt}l@{\hskip 12pt}r@{\hskip 4pt}l@{\hskip 12pt}r@{\hskip 4pt}l@{\hskip 12pt}r@{\hskip 4pt}l@{\hskip 12pt}r@{\hskip 4pt}l@{\hskip 12pt}r@{\hskip 4pt}l@{\hskip 12pt}}
\toprule
Model & $|\mathcal{F}|$ & $|M|$ & \multicolumn{2}{c@{\hskip 12pt}}{treat} & \multicolumn{2}{c@{\hskip 12pt}}{bis-std} & \multicolumn{2}{c@{\hskip 12pt}}{bis-adv} & \multicolumn{2}{c@{\hskip 12pt}}{pt-bis-std} & \multicolumn{2}{c@{\hskip 12pt}}{pt-bis-adv} & \multicolumn{2}{c@{\hskip 12pt}}{Restart} \\
\midrule
ceas-4-10 & 1e6 & 265 & \textbf{<1s} & 1978 & <1s & 1978 & <1s & 1978 & <1s & 1978 & <1s & 1978 & 120s & 1 \\
ceas-10-8 & 1e8 & 451 & \textbf{29s} & 1500 & 29s & 1486 & 30s & 1472 & 30s & 1459 & 30s & 1470 & TO & 0 \\
ladder & 8.6e11 & 251 & TO & 1204 & TO & 1222 & TO & 1265 & TO & 1157 & TO & 1171 & TO & 917 \\
ladder-in & 8.6e12 & 503 & \textbf{305s} & 517 & TO & 670 & TO & 682 & 315s & 532 & 358s & 530 & TO & 0 \\
dpm & 1620.0 & 1976 & \textbf{4s} & 75 & 49s & 5 & 44s & 6 & 46s & 6 & 44s & 6 & TO & 0 \\
dpm-q & 1.6e4 & 1733 & \textbf{5s} & 58 & 52s & 5 & 46s & 6 & 49s & 6 & 46s & 6 & TO & 0 \\
virus & 1.2e4 & 118 & \textbf{2s} & 1312 & TO & 0 & TO & 0 & TO & 0 & TO & 0 & 5s & 659 \\
airport & 1.3e12 & 51 & 58s & 4726 & \textbf{58s} & 4740 & 59s & 4634 & 94s & 4754 & 95s & 4689 & TO & 3329 \\
airport-b & 1.8e14 & 115 & \textbf{21s} & 1583 & 31s & 1648 & 32s & 1609 & 29s & 1610 & 29s & 1627 & TO & 565 \\
\bottomrule
\end{tabular}

%% file: tables/colored_mdps_exact_extended_table.tex
\begin{tabular}{lrr@{\hskip 12pt}r@{\hskip 4pt}l@{\hskip 12pt}r@{\hskip 4pt}l@{\hskip 12pt}r@{\hskip 4pt}l@{\hskip 12pt}r@{\hskip 4pt}l@{\hskip 12pt}r@{\hskip 4pt}l@{\hskip 12pt}r@{\hskip 4pt}l@{\hskip 12pt}}
\toprule
Model & $|\mathcal{F}|$ & $|M|$ & \multicolumn{2}{c@{\hskip 12pt}}{treat} & \multicolumn{2}{c@{\hskip 12pt}}{bis-std} & \multicolumn{2}{c@{\hskip 12pt}}{bis-adv} & \multicolumn{2}{c@{\hskip 12pt}}{pt-bis-std} & \multicolumn{2}{c@{\hskip 12pt}}{pt-bis-adv} & \multicolumn{2}{c@{\hskip 12pt}}{Restart} \\
\midrule
ceas-4-10 & 1e6 & 265 & \textbf{<1s} & 848 & <1s & 593 & <1s & 742 & <1s & 712 & <1s & 712 & <1s & 574 \\
ceas-10-8 & 1e8 & 451 & \textbf{66s} & 654 & 127s & 342 & 79s & 551 & 71s & 610 & 80s & 542 & 86s & 507 \\
ladder & 8.6e11 & 251 & TO & 411 & TO & 32 & TO & 85 & TO & 356 & TO & 142 & TO & 196 \\
ladder-in & 8.6e12 & 519 & \textbf{742s} & 225 & TO & 48 & TO & 82 & TO & 163 & TO & 59 & TO & 130 \\
dpm & 1620.0 & 1723 & \textbf{9s} & 29 & TO & 0 & TO & 0 & TO & 0 & 649s & 0 & 14s & 19 \\
dpm-q & 1.6e4 & 1733 & \textbf{10s} & 28 & TO & 0 & TO & 0 & TO & 0 & 614s & 0 & 15s & 19 \\
virus & 1.2e4 & 118 & 12s & 344 & 197s & 20 & 99s & 40 & 205s & 19 & 101s & 40 & \textbf{10s} & 414 \\
airport & 1.3e12 & 54 & 204s & 1339 & 199s & 1372 & \textbf{181s} & 1513 & 194s & 2299 & 263s & 1691 & 631s & 2142 \\
airport-b & 1.8e14 & 115 & \textbf{39s} & 852 & 112s & 434 & 123s & 396 & 60s & 786 & 105s & 446 & 361s & 672 \\
\bottomrule
\end{tabular}